\documentclass[12pt]{article}

\usepackage[a4paper,top=2.3cm,bottom=3cm,left=3cm,right=3cm]{geometry}
\usepackage[T1]{fontenc}
\usepackage{lmodern}

\usepackage{mathtools}
\usepackage{amsfonts}
\usepackage{amssymb}
\usepackage{amsthm}
\usepackage{bm}

\usepackage{graphicx}
\usepackage{subcaption}
\usepackage{booktabs}
\usepackage{caption}

\usepackage{enumitem}
\usepackage{setspace}
\usepackage{appendix}
\usepackage{etoolbox}
\usepackage{xcolor}

\usepackage[authoryear]{natbib}
\usepackage{hyperref}
\hypersetup{colorlinks=true, citecolor=blue, linkcolor=black, urlcolor=blue}

\makeatletter
\def\NAT@nmfmt#1{\textcolor{black}{#1}}
\makeatother

\theoremstyle{plain}
\newtheorem{theorem}{Theorem}[section]
\newtheorem{lemma}[theorem]{Lemma}

\theoremstyle{remark}

\DeclareMathOperator{\Gammadist}{Gamma}
\newcommand{\NK}{\mathrm{NK}}
\newcommand{\MNK}{\mathrm{MNK}}
\newcommand{\MME}{\mathrm{MME}}
\newcommand{\CFE}{\mathrm{CFE}}
\newcommand{\AEE}{\mathrm{AEE}}
\newcommand{\MLE}{\mathrm{MLE}}
\newcommand{\ind}{\mathbb{I}}
\DeclareMathOperator{\E}{\mathbb{E}}
\DeclareMathOperator{\Var}{Var}
\DeclareMathOperator{\Cov}{Cov}
\DeclareMathOperator{\AB}{AB}
\DeclareMathOperator{\RMSE}{RMSE}

\AtBeginDocument{%
  \setlength{\abovedisplayskip}{10pt}%
  \setlength{\belowdisplayskip}{10pt}%
  \setlength{\abovedisplayshortskip}{8pt}%
  \setlength{\belowdisplayshortskip}{8pt}%
}

\title{\vspace{-6mm}A Mathai-type Multivariate Nakagami-$m$ Distribution: Density, Closed-form Estimators, and Asymptotic Efficiency}
\author{%
  Hyeonwoo Kim and Hyoung-Moon Kim\footnote{Corresponding author \\
  	\textit{E-mail address:} hmk966a@gmail.com; hmkim@konkuk.ac.kr} \\
 {\it Department of Applied Statistics, Konkuk University, Seoul, South Korea}. %
}
\begin{document}
\maketitle
\begin{abstract}
In this study, we propose a novel multivariate Nakagami-$m$ distribution whose joint probability density function admits a fully closed-form expression. The proposed density is derived using a Mathai-type construction based on the partial sums of independent gamma random variables. For this distribution, the maximum likelihood estimator is not available in the closed form. To address this limitation, an asymptotically efficient closed-form estimator is developed by applying a one-step refinement to a $\sqrt{n}$-consistent initial estimator. Monte Carlo simulations demonstrate that the proposed estimator achieves a performance nearly identical to that of the numerically computed maximum likelihood estimator, while consistently outperforming the closed-form initial estimators. A real-data application further illustrates its practical utility.
\end{abstract}
\noindent%
\textit{Keywords:} Multivariate Nakagami-$m$ distribution, Asymptotic efficiency, Closed-form estimator, One-step estimator
\vfill
\section{Introduction}
\label{sec:intro}

The Nakagami-$m$ distribution \citep{nakagami1960m} has long been used as a flexible model for small-scale fading in wireless communications. However, in multi-branch systems such as maximal-ratio combining (MRC), equal-gain combining (EGC), and selection combining (SC), performance analysis requires joint modeling of several correlated envelopes rather than a single marginal distribution. This motivated the research on multivariate Nakagami-$m$ distributions (MNK).

Early work on MNK models was largely driven by diversity combining analysis. For example, \citet{aalo1995performance} investigated correlated Nakagami fading in MRC systems. Subsequent extensions considered higher-dimensional settings and structured dependence models. In particular, \citet{karagiannidis2003multivariate} obtained an $n$-variate Nakagami-$m$ formulation under exponential correlation. Despite these developments, many existing formulations rely on infinite or special-function representations and often impose restrictions on the fading parameters or dependence structures. These features make likelihood-based inferences and closed-form estimations difficult.

Motivated by these limitations, the first contribution of this study is the derivation of a MNK distribution, whose joint density admits a fully closed-form expression. The construction is based on the framework of \citet{mathal1992form}: starting from $k$ independent gamma random variables sharing a common scale parameter, partial sums are used to induce a multivariate gamma distribution, and a component wise square-root transformation is then applied to recover Nakagami-$m$ marginals. Cumulative gamma structures of this type also appear in applied multivariate gamma modeling; for example, \citet{nascimento2023divergence} used McKay's bivariate gamma distribution, representable through independent gamma components of the form $(V_0,V_0+V_1)$. A closely related construction was recently used by \citet{jang2023new} to develop efficient estimators of multivariate gamma distributions. However, to the best of our knowledge, this construction has not been previously developed as an inference-oriented framework for multivariate Nakagami envelopes.

Despite the tractability of joint density, the maximum likelihood estimator (MLE) of the MNK parameters is not available in closed form. The score equations involve the digamma function $\psi(\cdot)$ evaluated at gamma shape increments, and the adjacent shape parameters appear simultaneously inside the transcendental terms. Consequently, even after the scale parameter is profiled, the remaining likelihood equations form a coupled nonlinear system that does not yield an explicit algebraic solution. Therefore, direct maximization of the likelihood requires iterative numerical optimization, whose computational cost and stability depend on the dimensions of the parameter space and the choice of initial values.

Related issues of numerical fitting and efficiency have also been studied for gamma distributions. For example, \citet{clarke2012fast} compared the maximum likelihood, method of moments, and minimum distance estimators. In the present study, two closed-form initial estimators were developed to bypass repeated likelihood optimization. The first is a method-of-moments estimator (MME), motivated by the broader moment-based estimation literature on Nakagami-type models \citep{cheng2002generalized}. The second is a closed-form estimator (CFE) that follows closed-form estimation methods for gamma and related distributions \citep{ye2017closed,zhao2021closed,zhao2022closed}. We demonstrate that both estimators are $\sqrt{n}$-consistent and asymptotically normal. However, neither attains the Cram\'er--Rao lower bound, and therefore each falls short of the asymptotic efficiency attained by the MLE.

To close the remaining efficiency gap, the second contribution of this study is the construction of an asymptotically efficient estimator (AEE) for the MNK distribution. The strategy follows the one-step efficiency principle of \citet{le1956asymptotic}, formalized in \citet{lehmann1998theory}: starting from any $\sqrt{n}$-consistent initial estimator, a single Newton--Raphson update along the score yields an estimator that is asymptotically equivalent to the MLE. This idea has recently been used to construct closed-form efficient estimators for several distributional models, including the negative binomial distribution \citep{zhao2023new}, the Dirichlet distribution \citep{ho2023asymptotically}, Weibull-type distributions \citep{kim2023new,kim2024new}, the Fr\'echet distribution \citep{lee2025comprehensive}, and the negative multinomial distribution \citep{zhao2025new}. A closely related one-step construction was also studied by \citet{brouste2025one} for generalized linear models, in which a single Fisher-scoring update from an initial estimator achieved asymptotic efficiency while avoiding repeated likelihood optimization. Related pilot-based one-step updating has also appeared in computationally constrained estimation problems \citep{jin2025composite}. In the present work, we construct two AEEs for the MNK distribution by applying a one-step refinement to each of the two $\sqrt{n}$-consistent initial estimators developed earlier and compare their finite-sample performance to assess the influence of the initial estimator on the resulting AEE.

The remainder of this paper is organized as follows. Section~\ref{sec:mnk} derives the MNK density using the Mathai-type construction, computes its score and Hessian, and develops two closed-form initial estimators together with their asymptotic distributions. Section~\ref{sec:aee} presents the construction of the proposed AEE and verifies the regularity conditions that underlie its asymptotic efficiency. Section~\ref{sec:simulation} presents a Monte Carlo simulation study, in which the AEE is compared with the MME, CFE, and MLE across diverse parameter settings. Section~\ref{sec:realdata} illustrates the practical utility of the proposed framework through a goodness-of-fit analysis of real-world datasets. Finally, Section~\ref{sec:conclusion} summarizes the main findings of this study. The proofs of the theorems and auxiliary lemmas are presented in the appendices.
\section{Model formulation and initial estimators}
\label{sec:mnk}

In this section, the MNK distribution is formally introduced using a Mathai-type gamma construction, and its likelihood-based quantities are derived. Two closed-form initial estimators, the MME and CFE, are presented together with their asymptotic behaviors.

\subsection{Derivation of the MNK PDF}
\label{subsec:pdf}

Let $\bm{V} = (V_1, \dots, V_k)^{\top}$ be a vector of independent gamma random variables with a common scale parameter $\beta > 0$; that is, $V_i \sim \Gammadist(\alpha_i, \beta)$ for $i = 1, \dots, k$ with $\alpha_i > 0$. The joint density of $\bm{V}$ is given by
\begin{equation*}
f_{\bm{V}}(\bm{v}) = \prod_{i=1}^k \frac{v_i^{\alpha_i - 1} e^{-v_i/\beta}}{\Gamma(\alpha_i)\,\beta^{\alpha_i}}\, \ind\{v_i > 0\}.
\end{equation*}

We define $\bm{Z} = (Z_1, \dots, Z_k)^{\top}$ using the partial sums. The inverse transformation is $V_i = Z_i - Z_{i-1}$ with the convention $Z_0 := 0$, whose Jacobian is a lower bidiagonal matrix. Because each increment $Z_i - Z_{i-1} = V_i$ is positive, the support of $\bm{Z}$ is the ordered cone $0 < z_1 < z_2 < \cdots < z_k$, and a standard change in variables yields:
\begin{equation}
\label{eq:z-jpdf}
f_{\bm{Z}}(\bm{z}) = \prod_{i=1}^k \frac{(z_i - z_{i-1})^{\alpha_i - 1} e^{-(z_i - z_{i-1})/\beta}}{\Gamma(\alpha_i)\,\beta^{\alpha_i}} \cdot \ind\{0 < z_1 < \cdots < z_k\}.
\end{equation}

By adding independent gammas to a common scale, the marginal distribution of $Z_i$ is $\Gammadist(\sum_{j=1}^i \alpha_j, \beta)$.

Next, let $\bm{Y} = (Y_1, \dots, Y_k)^{\top}$ be defined as $Y_i = \sqrt{Z_i}$. The image support is $0 < y_1 < \cdots < y_k$ with the convention $Y_0 := 0$, and the Jacobian matrix of the inverse map $\bm{y} \mapsto \bm{z}$ is diagonal. Applying the change in variables to the marginal of $Z_i$ and reparametrizing via $m_i := \sum_{j=1}^i \alpha_j$ and $\Omega_i := m_i \beta$ gives:
\begin{equation*}
f_{Y_i}(y) = \frac{2 m_i^{m_i}}{\Gamma(m_i)\,\Omega_i^{m_i}}\, y^{2m_i - 1} \exp\!\left(-\frac{m_i}{\Omega_i} y^2\right),
\end{equation*}
This coincides with the probability density function of the standard Nakagami distribution $\NK(m_i, \Omega_i)$. As $\beta$ is a common scale across all components, the ratios $\Omega_i / m_i$ are equal for $i = 1, \dots, k$; hence, $\Omega_i = (m_i / m_k)\Omega_k$. Consequently, the free parameters of the model were reduced to $(m_1, \dots, m_k, \Omega_k)$, which is the total number of $k+1$ parameters. By combining equation~\eqref{eq:z-jpdf} with the square-root transformation, the joint density of $\bm{Y}$ is obtained as follows:
\begin{align}
\label{eq:mnk-pdf}
f_{\bm{Y}}(\bm{y}) &= \frac{2^k m_k^{m_k}}{\Omega_k^{m_k} \prod_{i=1}^k \Gamma(m_i - m_{i-1})} \left(\prod_{i=1}^k y_i\right)\left(\prod_{i=1}^k (y_i^2 - y_{i-1}^2)^{m_i - m_{i-1} - 1}\right) \nonumber \\
&\quad \times \exp\!\left(-\frac{m_k}{\Omega_k} y_k^2\right)\cdot \ind\{0 < y_1 < \cdots < y_k\},
\end{align}
where $m_0 := 0$ and the parameter space of the MNK model is:
\begin{equation*}
\Theta=\left\{(m_1,\ldots,m_k,\Omega_k):m_1 \geq \frac12, m_i-m_{i-1}>0,\ i=2,\ldots,k,\Omega_k>0\right\}.
\end{equation*}

The constraint $m_1\geq 1/2$ is imposed, such that the first marginal distribution is a valid Nakagami density. For the asymptotic results below, we assume that the true parameter lies in the interior of $\Theta$, namely, $m_1> \frac12$, $m_i-m_{i-1}>0,\ i=2,\ldots,k, \Omega_k>0$. Therefore, the gamma increment construction described above is denoted as
\begin{equation*}
\bm{Y} = (Y_1, \dots, Y_k)^{\top} \sim \MNK(\bm{m}; \Omega_k).
\end{equation*}

\subsection{Log-likelihood and its derivatives}
\label{subsec:score}

For a single observation $\bm{y} = (y_1, \dots, y_k)^{\top}$, the log-likelihood associated with the density in equation~\eqref{eq:mnk-pdf} is:
\begin{align}
\label{eq:loglike}
\ell(\bm{m}; \Omega_k \mid \bm{y}) &= k \log 2 + m_k \log m_k - m_k \log \Omega_k - \sum_{i=1}^k \log \Gamma(m_i - m_{i-1}) \nonumber \\
&\quad + \sum_{i=1}^k \log y_i + \sum_{i=1}^k (m_i - m_{i-1} - 1) \log(y_i^2 - y_{i-1}^2) - \frac{m_k}{\Omega_k} y_k^2.
\end{align}

Differentiating equation~\eqref{eq:loglike} with respect to each parameter yields score components. For $j = 1, \dots, k-1$,
\begin{equation}
\label{eq:score-mj}
\frac{\partial}{\partial m_j}\ell(\bm{m}; \Omega_k \mid \bm{y}) = -\psi(m_j - m_{j-1}) + \psi(m_{j+1} - m_j) + \log\!\left(\frac{y_j^2 - y_{j-1}^2}{y_{j+1}^2 - y_j^2}\right),
\end{equation}
whereas the score components corresponding to $m_k$ and $\Omega_k$ are
\begin{align}
\label{eq:score-mk}
\frac{\partial}{\partial m_k}\ell(\bm{m}; \Omega_k \mid \bm{y}) &= \log m_k + 1 - \log \Omega_k - \psi(m_k - m_{k-1}) \nonumber \\
&\quad + \log(y_k^2 - y_{k-1}^2) - \frac{y_k^2}{\Omega_k}, \\
\label{eq:score-omega}
\frac{\partial}{\partial \Omega_k}\ell(\bm{m}; \Omega_k \mid \bm{y}) &= -\frac{m_k}{\Omega_k} + \frac{m_k}{\Omega_k^2} y_k^2.
\end{align}

Here $\psi(x) := \frac{d}{dx} \log \Gamma(x)$ is the digamma function and $\psi^{(i)}$ denotes the $i$th derivative. For $j = 1, \dots, k-1$,the nonzero entries of the Hessian of the log-likelihood are:
\begin{align}
\label{eq:hess-mj}
\frac{\partial^2}{\partial m_j^2}\ell(\bm{m}; \Omega_k \mid \bm{y}) &= -\psi^{(1)}(m_j - m_{j-1}) - \psi^{(1)}(m_{j+1} - m_j), \\
\frac{\partial^2}{\partial m_j \partial m_{j+1}}\ell(\bm{m}; \Omega_k \mid \bm{y}) &= \psi^{(1)}(m_{j+1} - m_j), \nonumber \\
\frac{\partial^2}{\partial m_k^2}\ell(\bm{m}; \Omega_k \mid \bm{y}) &= \frac{1}{m_k} - \psi^{(1)}(m_k - m_{k-1}), \nonumber \\
\label{eq:hess-mk-omega}
\frac{\partial^2}{\partial m_k \partial \Omega_k}\ell(\bm{m}; \Omega_k \mid \bm{y}) &= -\frac{1}{\Omega_k} + \frac{y_k^2}{\Omega_k^2}, \\
\frac{\partial^2}{\partial \Omega_k^2}\ell(\bm{m}; \Omega_k \mid \bm{y}) &= \frac{m_k}{\Omega_k^2} - \frac{2 m_k y_k^2}{\Omega_k^3}. \nonumber
\end{align}

All other second-order cross-partials vanish. In particular, $\frac{\partial^2}{\partial m_r \partial m_s}\ell(\bm{m}; \Omega_k \mid \bm{y}) = 0$ for $|r - s| > 1$; thus, the $\bm{m}$-block of the Hessian is tridiagonal.

The nonzero third-order partial derivatives, up to permutations of the differentiation indices, required for the verification of the regularity conditions associated with one-step efficiency are, for $j = 1, \dots, k-1$,
\begin{align}
\label{eq:third-mj}
\frac{\partial^3}{\partial m_j^3}\ell(\bm{m}; \Omega_k \mid \bm{y}) &= -\psi^{(2)}(m_j - m_{j-1}) + \psi^{(2)}(m_{j+1} - m_j), \\
\frac{\partial^3}{\partial m_j^2 \partial m_{j+1}}\ell(\bm{m}; \Omega_k \mid \bm{y}) &= -\psi^{(2)}(m_{j+1} - m_j), \nonumber \\
\frac{\partial^3}{\partial m_j \partial m_{j+1}^2}\ell(\bm{m}; \Omega_k \mid \bm{y}) &= \psi^{(2)}(m_{j+1} - m_j), \nonumber \\
\label{eq:third-mk}
\frac{\partial^3}{\partial m_k^3}\ell(\bm{m}; \Omega_k \mid \bm{y}) &= -\frac{1}{m_k^2} - \psi^{(2)}(m_k - m_{k-1}), \\
\frac{\partial^3}{\partial m_k \partial \Omega_k^2}\ell(\bm{m}; \Omega_k \mid \bm{y}) &= \frac{1}{\Omega_k^2} - \frac{2 y_k^2}{\Omega_k^3}, \nonumber \\
\frac{\partial^3}{\partial \Omega_k^3}\ell(\bm{m}; \Omega_k \mid \bm{y}) &= -\frac{2 m_k}{\Omega_k^3} + \frac{6 m_k y_k^2}{\Omega_k^4}. \nonumber
\end{align}

\subsection{Initial estimators}
\label{subsec:init}

Constructing an AEE for the MNK distribution requires a $\sqrt{n}$-consistent initial estimator. We considered two closed-form estimators: the MME, which equates the first and second population moments of the partial sums with their sample counterparts, and the CFE, which exploits the log moments of the gamma-distributed quadratic increments to derive closed-form expressions for the parameters. In this subsection, let $\bm{Y}_1, \dots, \bm{Y}_n$ be independent and identically distributed copies of $\bm{Y} \sim \MNK(\bm{m}; \Omega_k)$. For each observation $t = 1, \dots, n$, the quadratic increment $\Delta_{j, t} := Y_{j, t}^2 - Y_{j-1, t}^2$ with the convention $Y_{0, t} := 0$. By construction, $\Delta_{j, t} \sim \Gammadist(\alpha_j, \beta)$, where $\alpha_j = m_j - m_{j-1}$ and $\beta = \Omega_k / m_k$.

\subsubsection{Method-of-moments estimator}

Let $\mu_j := \E[Z_j] = m_j \beta$ and $M_2 := \E[Z_k^2] = \beta^2 m_k (m_k + 1)$; thus, $M_2 - \Omega_k^2 = \beta \Omega_k > 0$ is the true parameter, and the MME is well-defined at the population level. The sample moments $\bar{Z}_j := n^{-1} \sum_{t=1}^n Z_{j, t}$ and $\overline{Z_k^2} := n^{-1} \sum_{t=1}^n Z_{k, t}^2$. By the strong law of large numbers, $\overline{Z_k^2} - \bar{Z}_k^2 > 0$ with a probability tending to one. Thus, the MME is defined as
\begin{equation}
\label{eq:mme}
\hat{\Omega}_{k, \MME} = \bar{Z}_k, \quad \hat{m}_{j, \MME} = \frac{\bar{Z}_k \bar{Z}_j}{\overline{Z_k^2} - \bar{Z}_k^2}.
\end{equation}

The asymptotic distribution of the MME is summarized in the following theorem, whose proof is provided in Appendix~\ref{subsec:thm2.1}.

\begin{theorem}
\label{thm:mme}
Let $\hat{\bm{\theta}}_{\MME} := (\hat{m}_{1, \MME}, \dots, \hat{m}_{k, \MME}, \hat{\Omega}_{k, \MME})^{\top}$ denote the MME defined in \eqref{eq:mme}. Then, $\hat{\bm{\theta}}_{\MME}$ is $\sqrt{n}$-consistent and
\begin{equation*}
\sqrt{n}\bigl(\hat{\bm{\theta}}_{\MME} - \bm{\theta}\bigr) \xrightarrow{d} N_{k+1}(\bm{0}, \bm{\Sigma}_{\MME}),
\end{equation*}
where the nonzero entries of $\bm{\Sigma}_{\MME}$ are given by
\begin{equation*}
[\bm{\Sigma}_{\MME}]_{r, s} = m_{\min(r, s)} + \left(2 + \frac{1}{m_k}\right) m_r m_s, \quad [\bm{\Sigma}_{\MME}]_{k+1, k+1} = \frac{\Omega_k^2}{m_k},
\end{equation*}
for $r, s = 1, \dots, k$.
\end{theorem}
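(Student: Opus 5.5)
The plan is to write $\hat{\bm{\theta}}_{\MME}$ as a smooth function of a vector of sample means, apply the multivariate central limit theorem, and then use the delta method. Let $\bm{W}_t := (Z_{1,t},\dots,Z_{k,t},Z_{k,t}^2)^{\top}$ with $Z_{j,t}=Y_{j,t}^2$. These are i.i.d. with all moments finite, since each $Z_j$ is gamma. The multivariate CLT then gives $\sqrt{n}(\bar{\bm{W}}-\bm{w}_0)\xrightarrow{d}N_{k+1}(\bm{0},\bm{\Gamma})$, where $\bm{w}_0=(\mu_1,\dots,\mu_k,M_2)^{\top}$. Define $g:\mathbb{R}^{k+1}\to\mathbb{R}^{k+1}$ by
\begin{equation*}
g_j(\bm{w})=\frac{w_k w_j}{w_{k+1}-w_k^2}\quad (j\le k), \qquad g_{k+1}(\bm{w})=w_k .
\end{equation*}
Then $\hat{\bm{\theta}}_{\MME}=g(\bar{\bm{W}})$ on the event $\overline{Z_k^2}-\bar Z_k^2>0$, which has probability tending to one. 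At the population level $g(\bm{w}_0)=\bm{\theta}$, because $M_2-\mu_k^2=\beta^2 m_k$ and $\mu_k\mu_j/(\beta^2m_k)=m_j$. The map $g$ is $C^1$ near $\bm{w}_0$, so the delta method yields asymptotic normality with covariance $\bm{\Sigma}_{\MME}=\nabla g\,\bm{\Gamma}\,\nabla g^{\top}$. $\sqrt{n}$-consistency follows immediately from asymptotic normality.

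To compute $\bm{\Gamma}$, I would use the increment representation $Z_k=Z_j+(Z_k-Z_j)$, where the two pieces are independent gammas with common scale $\beta$ and shapes $m_j$ and $m_k-m_j$. This gives
\begin{itemize}
\item $\Cov(Z_i,Z_j)=\beta^2 m_{\min(i,j)}$;
\item $\Cov(Z_j,Z_k^2)=2\beta^3 m_j(m_k+1)$, using $\Cov(X,X^2)=2\beta^3a(a+1)$ for $X\sim\Gammadist(a,\beta)$ together with the independent-increment cross term $2\E[Z_k-Z_j]\Var(Z_j)$;
\item $\Var(Z_k^2)=\beta^4\{m_k(m_k+1)(m_k+2)(m_k+3)-m_k^2(m_k+1)^2\}$.
\end{itemize}
Writing $V:=M_2-\mu_k^2=\beta^2m_k$, the gradient rows at $\bm{w}_0$ are
\begin{equation*}
dm_j=\frac{\Omega_k}{V}\,dZ_j+\frac{\mu_j}{V}\,dZ_k-\frac{\Omega_k\mu_j}{V^2}\bigl(dM_2-2\Omega_k\,dZ_k\bigr),\qquad d\Omega_k=dZ_k .
\end{equation*}

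I would then expand the quadratic forms. The $(k+1,k+1)$ entry is immediate: $\beta^2m_k=\Omega_k^2/m_k$. For the cross entries $[\bm{\Sigma}_{\MME}]_{j,k+1}$, I substitute the covariances above. The three contributions are $m_j\beta$, $m_j\beta$ and $-2m_j\beta$, so they cancel exactly. This explains why the statement lists only the $\bm{m}$-block and the $\Omega_k$ variance as nonzero.

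The main obstacle is the $\bm{m}$-block $[\bm{\Sigma}_{\MME}]_{r,s}$. It is a bookkeeping-heavy expansion involving nine types of covariance terms and the fourth moment of $Z_k$. I would organize it by first forming the combination $U:=dM_2-2\Omega_k\,dZ_k$, which is the linearization of the sample variance of $Z_k$. Its covariances are simple:
\begin{equation*}
\Cov(Z_j,U)=2\beta^3m_j, \qquad \Var(U)=\beta^4(2m_k^2+6m_k),
\end{equation*}
the latter from the gamma fourth central moment. In terms of $U$, $dm_r=\beta^{-1}dZ_r+(m_r/(\beta m_k))\,dZ_k-(m_r/(\beta^2m_k))\,U$. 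Then I collect the terms:
\begin{itemize}
\item the $dZ_r,dZ_s$ term gives $m_{\min(r,s)}$;
\item the terms mixing $dZ_{r}$ or $dZ_s$ with $dZ_k$ give $2m_rm_s/m_k$;
\item the terms mixing $dZ_r$ or $dZ_s$ with $U$ give $-4m_rm_s/m_k$;
\item the $dZ_k,dZ_k$ term gives $m_rm_s/m_k$;
\item the terms mixing $dZ_k$ with $U$ give $-4m_rm_s/m_k$;
\item the $U,U$ term gives $m_rm_s(2+6/m_k)$.
\end{itemize}
These sum to $m_{\min(r,s)}+(2+1/m_k)m_rm_s$, which is the formula in the statement. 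I would double-check each coefficient numerically, for instance with $k=2$, because sign slips in this expansion are the most likely failure point.
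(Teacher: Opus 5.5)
Your proposal is correct and follows the paper's route. Both apply the multivariate CLT to $(\bar Z_1,\dots,\bar Z_k,\overline{Z_k^2})$ and then the delta method through the same map, using the same covariances $\Cov(Z_r,Z_s)=\beta^2 m_{\min(r,s)}$, $\Cov(Z_r,Z_k^2)=2\beta^3m_r(m_k+1)$ and $\Var(Z_k^2)$; in both, the cross entries cancel as $\beta m_j+\beta m_j-2\beta m_j=0$. Your grouping through $U=dM_2-2\Omega_k\,dZ_k$, with $\Cov(Z_j,U)=2\beta^3m_j$ and $\Var(U)=\beta^4(2m_k^2+6m_k)$, is only tidier bookkeeping. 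Its six coefficients do sum to $m_{\min(r,s)}+(2+1/m_k)m_rm_s$, which spells out the ``algebraic cancellation'' the paper leaves implicit.
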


\subsubsection{Closed-form estimator}

Define the population log-moments $A_j := \E[\log \Delta_j] = \psi(\alpha_j) + \log \beta$ and $B_j := \E[\Delta_j \log \Delta_j] = \beta (\alpha_j A_j + 1)$ together with $R := \E[Y_k^2] = \Omega_k$, and let $S_A := \sum_{j=1}^k A_j^{-1}$. Solving systems $B_j = \beta(\alpha_j A_j + 1)$ and $\Omega_k = \beta m_k = \beta \sum_j \alpha_j$ for $\beta$ yields:
\begin{equation*}
\beta = \frac{\sum_{j=1}^k B_j / A_j - \Omega_k}{\sum_{j=1}^k 1 / A_j},
\end{equation*}
from which $\alpha_j$ and $m_r = \sum_{j \leq r} \alpha_j$ are recovered in closed form. This closed-form inversion requires a non-degeneracy condition. Specifically, the formula involves reciprocals of $A_j=\psi(\alpha_j)+\log\beta$ and the denominator $S_A=\sum_{j=1}^k A_j^{-1}$. Assuming the CFE non-degeneracy conditions $A_j\neq 0$ for all $j=1,\ldots,k$ and $S_A\neq 0$, the population denominator satisfies
\[
D:=\sum_{j=1}^k \frac{B_j}{A_j}-\Omega_k=\beta S_A\neq 0,
\]
such that the CFE is well-defined locally.

Given the sample quantities $P_{j, n} := \frac{1}{n} \sum_{t=1}^n \log \Delta_{j, t}$, $Q_{j, n} := \frac{1}{n} \sum_{t=1}^n \Delta_{j, t} \log \Delta_{j, t}$ and $R_n := \frac{1}{n} \sum_{t=1}^n Y_{k, t}^2$ together with the auxiliary sums, $S_n := \sum_{a=1}^k \frac{1}{P_{a, n}}$, $D_n := \sum_{a=1}^k \frac{Q_{a, n}}{P_{a, n}} - R_n$ and $T_{r, n} := \sum_{a=1}^r \frac{Q_{a, n}}{P_{a, n}}$, the CFE is defined by
\begin{equation}
\label{eq:cfe}
\hat{m}_{r, \CFE} = \frac{S_n T_{r, n}}{D_n} - \sum_{a=1}^r \frac{1}{P_{a, n}}, \quad \hat{\Omega}_{k,\CFE} = R_n.
\end{equation}

Therefore, the conditions $A_j\neq 0$ for all $j$ and $S_A\neq 0$ are the non-degeneracy conditions for the CFE. Under these conditions, $P_{j,n}\to A_j$ and $D_n\to \beta S_A\neq 0$ almost surely. Hence, the denominators in equation~\eqref{eq:cfe} do not vanish, with a probability tending to one as $n\to\infty$. To state the asymptotic distribution, we define:
\begin{equation}
\label{eq:Gr-def}
G_r := \frac{m_r + \sum_{a=1}^r A_a^{-1}}{S_A}, \quad u_{r, i} := \frac{\ind\{i \leq r\} - G_r}{A_i},
\end{equation}
for $r = 1, \dots, k$ and $i = 1, \dots, k$. The results are summarized in the following theorem, the proof of which is provided in Appendix~\ref{subsec:thm2.2}.

\begin{theorem}
\label{thm:cfe}
Let $\hat{\bm{\theta}}_{\CFE} := (\hat{m}_{1, \CFE}, \dots, \hat{m}_{k, \CFE}, \hat{\Omega}_{k,\CFE})^{\top}$ denote the CFE defined in equation~\eqref{eq:cfe}. Assume that $A_j \neq 0$ for all $j = 1, \dots, k$ and $S_A = \sum_{j=1}^k A_j^{-1} \neq 0$. Then, $\hat{\bm{\theta}}_{\CFE}$ is $\sqrt{n}$-consistent and
\begin{equation*}
\sqrt{n}\bigl(\hat{\bm{\theta}}_{\CFE} - \bm{\theta}\bigr) \xrightarrow{d} N_{k+1}(\bm{0}, \bm{\Sigma}_{\CFE}),
\end{equation*}
where, with $G_r$ and $u_{r, i}$ as in equation~\eqref{eq:Gr-def}, the nonzero entries of $\bm{\Sigma}_{\CFE}$ are given by
\begin{align*}
[\bm{\Sigma}_{\CFE}]_{r, s} &= \Lambda_{r, s} - m_k G_r G_s, \\
[\bm{\Sigma}_{\CFE}]_{k+1, k+1} &= \frac{\Omega_k^2}{m_k},
\end{align*}
for $r, s = 1, \dots, k$, where
\begin{equation*}
\Lambda_{r, s} := \sum_{i=1}^k \bigl[\alpha_i A_i^2 + 2 A_i + \alpha_i \psi^{(1)}(\alpha_i) + 1\bigr] \frac{(\ind\{i \leq r\} - G_r)(\ind\{i \leq s\} - G_s)}{A_i^2}.
\end{equation*}
\end{theorem}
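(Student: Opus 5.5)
The plan is the multivariate CLT followed by the delta method. Define the i.i.d. vector
\[
\bm{W}_t := (\log\Delta_{1,t},\dots,\log\Delta_{k,t},\ \Delta_{1,t}\log\Delta_{1,t},\dots,\Delta_{k,t}\log\Delta_{k,t},\ Y_{k,t}^2)^\top .
\]
Since $\Delta_{j,t}\sim\Gammadist(\alpha_j,\beta)$ has finite moments of all orders of $\log\Delta$ and $\Delta\log\Delta$, $\bm{W}_t$ has finite second moments. The CLT then gives $\sqrt{n}(\bar{\bm{W}}_n-\bm{w})\xrightarrow{d}N(\bm 0,\bm V)$ with $\bm w=(A,B,\Omega_k)$. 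The estimator is $\hat{\bm\theta}_{\CFE}=g(\bar{\bm W}_n)$, where $g$ is the rational map in \eqref{eq:cfe}. The map $g$ is smooth near $\bm w$ because $A_j\neq0$ and $D=\beta S_A\neq0$. We also have $g(\bm w)=\bm\theta$ by the population inversion preceding \eqref{eq:cfe}. The delta method then yields $\sqrt n$-consistency and normality with covariance $J\bm V J^\top$.

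\textbf{Building $\bm V$.} The increments $\Delta_1,\dots,\Delta_k$ are independent, so blocks for different $j$ are uncorrelated. Also $Y_k^2=\sum_j\Delta_j$. For a single $\Delta\sim\Gammadist(\alpha,\beta)$ we use the identity $\E[\Delta^p(\log\Delta)^q]=\partial_\alpha^q[\Gamma(\alpha+p)\beta^{\alpha+p}/\Gamma(\alpha)]$, which gives:
\begin{itemize}
\item $\Var\log\Delta=\psi^{(1)}(\alpha)$;
\item $\Cov(\log\Delta,\Delta\log\Delta)=\beta(\alpha\psi^{(1)}(\alpha)+A)$;
\item $\Var(\Delta\log\Delta)=\beta^2[\alpha(\alpha+1)(\psi(\alpha+2)+\log\beta)^2+\alpha(\alpha+1)\psi^{(1)}(\alpha+2)]-B^2$;
\item $\Cov(\log\Delta,\Delta)=\beta$;
\item $\Cov(\Delta\log\Delta,\Delta)=\beta^2(2\alpha A+\alpha+1)$ (using $\E[\Delta^2\log\Delta]=\beta^2\alpha(\alpha+1)(\psi(\alpha+2)+\log\beta)$ and $\psi(\alpha+2)=\psi(\alpha)+1/\alpha+1/(\alpha+1)$);
\item $\Var Y_k^2=\beta^2 m_k$.
\end{itemize}

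\textbf{Computing $J$.} Write $h_a:=Q_a/P_a$. Then
\[
\hat m_r = S\,T_r/D-\sum_{a\le r}P_a^{-1},\qquad D=\sum_a h_a-R,\qquad T_r=\sum_{a\le r}h_a .
\]
At the truth $T_r/D=(m_r+\sum_{a\le r}A_a^{-1})/S_A=G_r$, which explains where $G_r$ comes from. Differentiating:
\begin{itemize}
\item With respect to $h_i$: $S_A(\ind\{i\le r\}-G_r)/D=(\ind\{i\le r\}-G_r)/\beta$.
\item With respect to $R$: $S_AT_r/D^2=G_r/\beta$.
\item With respect to $P_i$, through $S$ and the last sum: $-A_i^{-2}(G_r-\ind\{i\le r\})$.
\end{itemize}
Through $h_i$, the derivative in $Q_i$ is $1/A_i$ and in $P_i$ is $-B_i/A_i^2$. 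Collecting terms, the linear part of $\hat m_r-m_r$ is
\[
\sum_i \frac{\ind\{i\le r\}-G_r}{\beta A_i^2}\bigl[A_i(\bar Q_i-B_i)-(B_i-\beta)(\bar P_i-A_i)\bigr]+\frac{G_r}{\beta}(R_n-\Omega_k),
\]
where I used $B_i-\beta=\beta\alpha_iA_i$. Hence the $i$-th bracket equals $\beta A_i\,u_{r,i}A_i$ times the centered variable $X_i:=\Delta_i\log\Delta_i/\beta-\alpha_i\log\Delta_i$, up to normalization. So the influence function is $\sum_iu_{r,i}(X_i-\E X_i)+G_r(Y_k^2-\Omega_k)/\beta$.

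\textbf{Assembling the covariance.} A short computation with the moments above gives
\[
\Var X_i=\alpha_iA_i^2+2A_i+\alpha_i\psi^{(1)}(\alpha_i)+1-\text{(shift)},\qquad \Cov(X_i,\Delta_i/\beta)=\alpha_iA_i+1+\dots .
\]
Combining these with $\sum_i(\ind\{i\le r\}-G_r)\alpha_i=m_r-G_rm_k$ and the definition of $G_r$ should collapse the cross and $R$-terms into $-m_kG_rG_s$, producing $\Lambda_{r,s}-m_kG_rG_s$. The $\Omega_k$ entry is immediate: $\Var(Y_k^2)=\beta^2m_k=\Omega_k^2/m_k$. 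The off-diagonal covariances between $\hat m_r$ and $\hat\Omega_k$ require checking. The statement lists only the displayed entries as nonzero, so I would verify that $\Cov(\text{influence}_r,Y_k^2)$ vanishes; this should follow from the same identity $\sum_i u_{r,i}A_i\alpha_i\cdot(\cdot)=-G_rm_k+\dots$.

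\textbf{Main obstacle.} The hardest step is this final algebraic collapse. The digamma shifts $\psi(\alpha+2)$ and the $\log\beta$ terms must be shown to cancel exactly into the stated $\Lambda_{r,s}$ and $-m_kG_rG_s$, and the cross-covariance with $R_n$ must be shown to vanish. I would first do the calculation in the scale-free variable $\Delta/\beta\sim\Gammadist(\alpha,1)$, writing $\log\Delta=\log\beta+\log(\Delta/\beta)$. This isolates the $\log\beta$ contributions, which should cancel through the identity $\sum_i(\ind\{i\le r\}-G_r)/A_i=-(m_r-G_rm_k)$, itself obtained from the definition of $G_r$.
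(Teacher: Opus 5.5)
Your route is the paper's: a CLT for $(\log\Delta_j,\ \Delta_j\log\Delta_j,\ Y_k^2)$ followed by the delta method. Your Jacobian entries are correct. Packaging the $P_i$- and $Q_i$-derivatives into the single variable $\xi_i:=(\Delta_i/\beta-\alpha_i)\log\Delta_i$ (your $X_i$, with $\E\xi_i=1$) gives the influence function $\phi_r=\sum_i u_{r,i}(\xi_i-1)+G_r(Y_k^2-\Omega_k)/\beta$. That is a tidier organization than the paper's $\Lambda_1+\Lambda_2+\Lambda_3$ split.

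\textbf{The gap.} The content of the theorem is the explicit covariance, and you never carry out that step: you write ``$-\text{(shift)}$'', ``$+\dots$'' and ``should collapse''. Worse, two of the ingredients you supply are wrong, and either one alone breaks the collapse.
\begin{enumerate}
\item $\Cov(\Delta\log\Delta,\Delta)=\beta^2(\alpha A+\alpha+1)$, not $\beta^2(2\alpha A+\alpha+1)$. This follows from $\E[\Delta^2\log\Delta]=\beta^2[\alpha(\alpha+1)A+2\alpha+1]$ and $\E\Delta\,\E[\Delta\log\Delta]=\beta^2(\alpha^2A+\alpha)$. Hence $\Cov(\xi_i,\Delta_i/\beta)=\alpha_iA_i+1$ exactly. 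With your value it would be $2\alpha_iA_i+1$. The $(\hat m_r,\hat\Omega_k)$ entry would then come out as $\beta(m_r-G_rm_k)$, which for $r=k$ equals $-\beta m_k^2/S_A\neq 0$, instead of $0$.
\item The identity that follows from the definition of $G_r$ is $\sum_i(\ind\{i\le r\}-G_r)/A_i=\sum_{i\le r}A_i^{-1}-G_rS_A=-m_r$, not $-(m_r-G_rm_k)$. The quantity $m_r-G_rm_k$ is instead $\sum_i(\ind\{i\le r\}-G_r)\alpha_i$.
\end{enumerate}

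\textbf{The missing identity.} Adding the two sums in item 2 gives what you actually need, $\sum_i u_{r,i}(\alpha_iA_i+1)=-G_rm_k$; this is Lemma~\ref{lem:weighted-sum}. With it, the cross terms contribute $-2m_kG_rG_s$ and the $R_n$ term contributes $+m_kG_rG_s$. It also gives $\Cov(\phi_r,Y_k^2)=\beta(-G_rm_k+G_rm_k)=0$. With your version of the identity the cross terms would vanish instead, and you would get $\Lambda_{r,s}+m_kG_rG_s$.

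\textbf{Smaller points.}
\begin{itemize}
\item There is no ``shift''. Use $\psi(\alpha+2)=\psi(\alpha)+\frac{1}{\alpha}+\frac{1}{\alpha+1}$ and $\psi^{(1)}(\alpha+2)=\psi^{(1)}(\alpha)-\frac{1}{\alpha^2}-\frac{1}{(\alpha+1)^2}$. Then $\Var(\xi_i)=\Var(\Delta_i\log\Delta_i)/\beta^2-2\alpha_i\Cov(\log\Delta_i,\Delta_i\log\Delta_i)/\beta+\alpha_i^2\psi^{(1)}(\alpha_i)$ equals $\alpha_iA_i^2+2A_i+\alpha_i\psi^{(1)}(\alpha_i)+1$ exactly.
\item The $\log\beta$ terms do not cancel; they are absorbed into $A_i$.
\item Your moment identity is misstated. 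It should read $\E[\Delta^p(\log\Delta)^q]=\partial_s^q\bigl[\beta^s\Gamma(\alpha+s)/\Gamma(\alpha)\bigr]_{s=p}$. Differentiating in $\alpha$, as you wrote it, also hits the normalizer $1/\Gamma(\alpha)$; for $p=0$, $q=1$ it gives $\beta^\alpha\log\beta$ rather than $A$.
\end{itemize}
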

\section{Asymptotically efficient estimation}
\label{sec:aee}

In this section, the proposed AEE for the MNK distribution is formally developed. We begin by describing the estimator construction in Theorem~\ref{thm:AEE}, followed by a list of the regularity conditions that underlie the theorem. Conditions 1--4 are immediately from the i.i.d.\ setup, together with the smoothness of the joint density in equation~\eqref{eq:mnk-pdf}, and the remainder of this section is devoted to verifying Conditions 5--7.

\begin{theorem}[\citealp{lehmann1998theory}]
\label{thm:AEE}
Let $\bm{Y}_1, \dots, \bm{Y}_n$ be the $n$ observations of $\MNK(\bm{m}; \Omega_k)$. Assume that the true parameter $\bm{\theta}$ lies in the interior of $\Theta$ and that $\bm{\tilde{\theta}}$ is any $\sqrt{n}$-consistent estimator for $\bm{\theta}$. Define
\begin{equation*}
\hat{\bm{\theta}} = \bm{\tilde{\theta}} + \frac{1}{n} \bm{I}_1^{-1}(\bm{\tilde{\theta}}) \nabla \ell_n(\bm{\tilde{\theta}}),
\end{equation*}
where $\nabla \ell_n(\bm{\theta})=\sum_{t=1}^n \nabla \ell_1(\bm{\theta}\mid \bm{Y}_t)$ denotes the score vector for $n$ observations, with the one-observation score components given in equations~\eqref{eq:score-mj}--\eqref{eq:score-omega}, and $\bm{I}_1(\bm{\theta})$ denotes the Fisher information matrix for one observation given in equation~\eqref{eq:fisher-block}.
Then
\[
\sqrt{n}(\hat{\bm{\theta}}-\bm{\theta})
\xrightarrow{d}
N_{k+1}\!\left(\bm{0},\bm{I}_1^{-1}(\bm{\theta})\right).
\]

Consequently, each component $\hat{\theta}_j$, $j=1,\ldots,k+1$, is asymptotically efficient with asymptotic variance $(\bm{I}_1^{-1}(\bm{\theta}))_{jj}$.
\end{theorem}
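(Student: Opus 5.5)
The plan is to reduce the statement to the one-step efficiency theorem of \citet{lehmann1998theory} (their Theorem 6.4.3 and its multivariate analogue). That result requires the usual Cram\'er-type regularity conditions: i.i.d.\ sampling, common support not depending on $\bm{\theta}$, an open parameter neighbourhood, thrice differentiability of $\ell$, zero-mean score with $\E[\nabla\ell_1]=\bm 0$ and $\Cov(\nabla\ell_1)=\bm I_1=-\E[\nabla^2\ell_1]$, positive definiteness of $\bm I_1$, and a local integrable bound on the third derivatives. Conditions 1--4 follow from the i.i.d.\ setup and the smoothness of \eqref{eq:mnk-pdf}. The support is the cone $0<y_1<\cdots<y_k$, which does not depend on $\bm{\theta}$, and the parameter set is open at interior points. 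So the work lies in Conditions 5--7.

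\textbf{Information identities and positive definiteness.} I first verify the information identities. Each score component in \eqref{eq:score-mj}--\eqref{eq:score-omega} is a linear combination of $\log\Delta_j$ and $Y_k^2$ with $\Delta_j\sim\Gammadist(\alpha_j,\beta)$ independent. Using $\E\log\Delta_j=\psi(\alpha_j)+\log\beta$ and $\E Y_k^2=\Omega_k$, the mean is zero. This follows from $\log\Delta_j-\log\Delta_{j+1}$ having mean $\psi(\alpha_j)-\psi(\alpha_{j+1})$, and from $\log(m_k/\Omega_k)=-\log\beta$ with $\E[Y_k^2/\Omega_k]=1$.

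Since the Hessian entries in \eqref{eq:hess-mj}--\eqref{eq:hess-mk-omega} are deterministic apart from $y_k^2$, $\bm I_1=-\E\nabla^2\ell_1$ is explicit: a tridiagonal $\bm m$-block with entries built from $\psi^{(1)}(\alpha_j)$, plus the $(m_k,\Omega_k)$ entries $-1/m_k+\psi^{(1)}(\alpha_k)$, $0$ and $m_k/\Omega_k^2$. The equality with the score covariance follows from differentiating under the integral sign. This is justified because the density is an exponential-family-type expression in $(\log\Delta_j, y_k^2)$ with finite moments of all orders.

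For positive definiteness I use a reparametrization argument. The map $(\alpha_1,\dots,\alpha_k,\beta)\mapsto(\bm m,\Omega_k)$ is a diffeomorphism on the interior. In $(\bm\alpha,\beta)$ coordinates the model is a product of independent gamma families, so the information is block-diagonal. Each gamma block $\begin{pmatrix}\psi^{(1)}(\alpha_j)&1/\beta\\1/\beta&\alpha_j/\beta^2\end{pmatrix}$ is positive definite because $\alpha\psi^{(1)}(\alpha)>1$. Since the sum over $j$ shares the common $\beta$, the combined matrix is positive definite. Transporting this by the nonsingular Jacobian shows $\bm I_1(\bm\theta)$ is positive definite. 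It is also continuous in $\bm\theta$, which gives $\bm I_1(\tilde{\bm\theta})\to\bm I_1(\bm\theta)$ in probability.

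\textbf{Third-derivative bound (main obstacle).} The step needing care is Condition 7: a neighbourhood $N$ of $\bm\theta$ and $M(\bm y)$ with $\E M<\infty$ such that all third partials are bounded by $M$ on $N$. From \eqref{eq:third-mj}--\eqref{eq:third-mk}, the $\bm m$-only third derivatives involve $\psi^{(2)}(\alpha_j)$ and $1/m_k^2$, which are free of $\bm y$. Choose $N$ so that every $\alpha_j$ stays in $[\underline\alpha,\overline\alpha]\subset(0,\infty)$ and $\Omega_k\in[\underline\Omega,\overline\Omega]$, which is possible because $\bm\theta$ is interior. On $N$, $|\psi^{(2)}|$ is bounded by continuity. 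The terms involving $\Omega_k$ are bounded by $c_1+c_2y_k^2$, using $m_k\le\overline m$ and $\Omega_k\ge\underline\Omega$. Taking $M(\bm y)=C(1+y_k^2)$ gives $\E M=C(1+\Omega_k)<\infty$.

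\textbf{Conclusion.} With these conditions, the Lehmann--Casella theorem applies. A Taylor expansion of $\nabla\ell_n(\tilde{\bm\theta})$ around $\bm\theta$ gives
\[
\sqrt n(\hat{\bm\theta}-\bm\theta)=\bm I_1^{-1}(\bm\theta)\,n^{-1/2}\nabla\ell_n(\bm\theta)+o_p(1).
\]
Here the remainder is controlled by $\sqrt n$-consistency of $\tilde{\bm\theta}$ (Theorems~\ref{thm:mme} and \ref{thm:cfe} supply such estimators), the law of large numbers for the Hessian, and the bound $M$. The multivariate central limit theorem then yields the limit $N_{k+1}(\bm 0,\bm I_1^{-1}(\bm\theta))$. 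Componentwise efficiency follows because $(\bm I_1^{-1})_{jj}$ is the Cram\'er--Rao bound.
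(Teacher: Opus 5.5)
Your proposal is correct. Its skeleton matches the paper's: you reduce to the Lehmann--Casella one-step theorem, take Conditions 1--4 as immediate, and handle Condition 7 exactly as the paper does, with a dominating function $C(1+y_k^2)$ on a compact neighbourhood where the $\alpha_j$ stay away from zero. The genuine differences are in Conditions 5 and 6.

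For Condition 5, the paper verifies the information identity entrywise. It writes the score through the centred variables $A_i^\ast$ and $C$, computes their covariances from gamma log-moments, and matches the result with $-\E[\nabla^2\ell_1]$. You instead note that the model is a full-rank exponential family in $(\log\Delta_1,\dots,\log\Delta_k,Y_k^2)$ with natural parameters $(\alpha_1-1,\dots,\alpha_k-1,-1/\beta)$, diffeomorphically reparametrized by $\bm\theta$. The Bartlett identities then hold automatically, and $\bm I_1$ can be read off the expected Hessian with no covariance bookkeeping.

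For Condition 6, the paper works directly with the tridiagonal block. It telescopes the quadratic form to $\sum_j b_j^2 - x_k^2/m_k$ and uses Cauchy--Schwarz to reduce positivity to $\sum_j 1/\psi^{(1)}(\alpha_j) < m_k$. You pass to $(\bm\alpha,\beta)$ coordinates and transport back through $\bm I_1(\bm\theta) = \bm J^{\top}\bm I_{(\bm\alpha,\beta)}\bm J$, where $\bm J = \partial(\bm\alpha,\beta)/\partial(\bm m,\Omega_k)$ has $\det\bm J = 1/m_k \neq 0$.

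One wording correction: the $(\bm\alpha,\beta)$ information is not block-diagonal, because $\beta$ is shared. It is the sum over $j$ of the $2\times 2$ gamma informations embedded in the coordinates $(\alpha_j,\beta)$. It is positive definite because each summand's quadratic form vanishes only when $x_{\alpha_j} = x_\beta = 0$; your next sentence suggests this is what you meant.

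The two criteria agree. The Schur complement of the diagonal $\bm\alpha$-block in your matrix is $\beta^{-2}\bigl(m_k - \sum_j 1/\psi^{(1)}(\alpha_j)\bigr)$, which is exactly the paper's condition. Your per-increment inequality $\alpha_j\psi^{(1)}(\alpha_j) > 1$ implies it by summation.

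The paper's route stays with the matrix as stated in \eqref{eq:fisher-block} and exposes the sharp necessary-and-sufficient criterion. Yours explains structurally why the information is nonsingular: it is an invertible reparametrization of independent gamma models sharing a scale. It also avoids the telescoping computation, at the price of invoking reparametrization invariance of Fisher information. Finally, you make explicit the step $\bm I_1(\tilde{\bm\theta}) \to \bm I_1(\bm\theta)$ in probability by continuity, which the one-step argument needs and the paper leaves implicit.
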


Theorem~\ref{thm:AEE} holds that the MNK distribution satisfies the regularity conditions listed below, which form the basis of the one-step efficiency argument in \citet[Theorem~5.3, Chapter~6]{lehmann1998theory}.

\begin{enumerate}
\item The distributions $F_{\bm{\theta}}$ of $\bm{Y}_i$, $i = 1, \dots, n$, have common support.
\item The distributions $F_{\bm{\theta}}$ of the samples are distinct.
\item The observations $\bm{Y}_i$, $i = 1, \dots, n$, are random samples of size $n$ from a density $f_{\bm{\theta}}$.
\item There exists an open subset $\mathcal{O} \subset \Theta$ containing the true parameter point such that, for almost all data points, the density $f_{\bm{\theta}}$ admits all possible third derivatives for all $\bm{\theta} \in \mathcal{O}$.
\item The following identities hold for $i, j = 1, \dots, k+1$.
\begin{equation*}
\E_{\bm{\theta}}\!\left[\frac{\partial}{\partial \theta_i} \ell_1(\bm{\theta}\mid\bm{Y}_t)\right] = 0 \quad \text{and} \quad (\bm{I}_1(\bm{\theta}))_{ij} = \E_{\bm{\theta}}\!\left[-\frac{\partial^2}{\partial \theta_i \partial \theta_j} \ell_1(\bm{\theta}\mid\bm{Y}_t)\right].
\end{equation*}
\item The elements $(\bm{I}_1(\bm{\theta}))_{ij}$ are finite and $\bm{I}_1(\bm{\theta})$ is positive definite for all $\bm{\theta} \in \mathcal{O}$.
\item There exist functions $M_{ijk'}$, possibly depending on the true parameters, such that
\begin{equation*}
\left|\frac{\partial^3}{\partial \theta_i \partial \theta_j \partial \theta_{k'}} \ell_1(\bm{\theta}\mid\bm{Y}_t)\right| \leq M_{ijk'}(\bm{y}), \qquad \forall \bm{\theta} \in \mathcal{O},
\end{equation*}
and $\E_{\bm{\theta}_0}[M_{ijk'}(\bm{Y})] < \infty$ for $i, j, k' = 1, \dots, k+1$, where $\bm{\theta}_0$ denotes the true value of $\bm{\theta}$.
\end{enumerate}

Because $\bm{Y}_i \overset{\text{i.i.d.}}{\sim} \MNK(\bm{m}; \Omega_k)$, Conditions 1--4 are immediately satisfied. Therefore, the remainder of this section is devoted to verifying Conditions 5--7.

\subsection{Verification of Condition 5}
\label{subsec:cond5_main}

The verification of Condition 5 proceeded in two steps. First, the score components in equations~\eqref{eq:score-mj}--\eqref{eq:score-omega} are rewritten as linear combinations of two centered random variables, defined as
\begin{equation}
\label{eq:AC-def}
A_i^\ast := \log \Delta_i - \E[\log \Delta_i], \qquad C := \frac{Y_k^2}{\Omega_k} - 1.
\end{equation}

The first is the mean-zero by the gamma log-moment identity and the second is by direct calculation. This representation immediately yields the first identity in Condition 5; that is, every score component has a mean of zero.

Second, computing the covariances between $A_i^\ast$ and $C$ from the moments of the underlying gamma increments and assembling them entrywise via centered representation produces a score covariance matrix. A direct comparison with the negative expected Hessian, obtained from equations~\eqref{eq:hess-mj}--\eqref{eq:hess-mk-omega} and using $\E[Y_k^2] = \Omega_k$, shows that the two matrices coincide; the full derivation is deferred to Appendix~\ref{subsec:cond5}. The resulting Fisher information matrix has a block structure.
\begin{equation}
\label{eq:fisher-block}
\bm{I}_1(\bm{\theta}) =
\begin{pmatrix}
\bm{I}_{\bm{m}\bm{m}} & \bm{0} \\
\bm{0}^{\top} & m_k / \Omega_k^2
\end{pmatrix},
\end{equation}
where $\bm{I}_{\bm{m}\bm{m}} \in \mathbb{R}^{k \times k}$ denotes the tridiagonal matrix with entries
\begin{align*}
\bigl[\bm{I}_{\bm{m}\bm{m}}\bigr]_{jj} &= \psi^{(1)}(\alpha_j) + \psi^{(1)}(\alpha_{j+1}), \\
\bigl[\bm{I}_{\bm{m}\bm{m}}\bigr]_{kk} &= \psi^{(1)}(\alpha_k) - \frac{1}{m_k}, \\
\bigl[\bm{I}_{\bm{m}\bm{m}}\bigr]_{j, j+1} &= -\psi^{(1)}(\alpha_{j+1}),
\end{align*}
for $j = 1, \dots, k-1$, and zero for $|r - s| > 1$. Therefore, Condition 5 is satisfied.

\subsection{Verification of Condition 6}
\label{subsec:cond6_main}

The block structure of $\bm{I}_1(\bm{\theta})$ in equation~\eqref{eq:fisher-block} reduces its positive definiteness to the positivity of the scalar entry $[\bm{I}_1(\bm{\theta})]_{\Omega_k \Omega_k} = m_k / \Omega_k^2$, which is immediate, and the positive definiteness of the tridiagonal block $\bm{I}_{\bm{m}\bm{m}}$. The latter is established by decomposing the quadratic form $\bm{x}^{\top} \bm{I}_{\bm{m}\bm{m}} \bm{x}$ via a telescoping sum and bounding the resulting cross terms with the Cauchy--Schwarz inequality. This yields a lower bound whose strict positivity follows from the trigamma inequality $\sum_{j=1}^k 1/\psi^{(1)}(\alpha_j) < m_k$, whose full derivation is given in Appendix~\ref{subsec:cond6}. The finiteness of the entries of $\bm{I}_1(\bm{\theta})$ is immediately derived from the finiteness of the trigamma function and the gamma moment calculations under Condition 5: Therefore, Condition 6 holds.

\subsection{Verification of Condition 7}
\label{subsec:cond7_main}

The third derivatives in equations~\eqref{eq:third-mj}--\eqref{eq:third-mk} fall into two categories: those that depend only on the parameters through the polygamma functions and those affine in $y_k^2$. In any compact neighborhood of the true parameter on which the shape increments are bounded away from zero, the former is uniformly bounded by a constant, whereas the latter is dominated by a function of the form $K_0 + K_1 y_k^2$, which is integrable because $\E[Y_k^2] = \Omega_k < \infty$. Therefore, Condition 7 holds.

As Conditions 1--7 have all been verified, Theorem~\ref{thm:AEE} applies. Hence, the AEE constructed from any $\sqrt{n}$-consistent initial estimator is asymptotically efficient, particularly when applied to the MME and CFE under the non-degeneracy conditions stated in Theorem~\ref{thm:cfe}.
\section{Simulation study}
\label{sec:simulation}

We assessed the finite-sample performance of the proposed AEE using Monte Carlo simulations. Three parameter configurations were considered: moderate fading, mild fading with three branches, and a near-boundary case, where one shape parameter was close to the lower limit, $1/2$. Six estimators---the MME, CFE, AEE, and MLE–based on each of the two initial estimators were compared from two complementary perspectives. First, we examine their asymptotic covariance structure by comparing empirical scaled covariances with the theoretical covariance matrices and the Cram\'er--Rao lower bound. Second, we assessed the finite-sample performance in terms of absolute bias (AB) and root mean squared error (RMSE). In addition, we measured the average CPU time required by each estimator across a wide range of sample sizes to compare the computational costs. These results support the asymptotic theory presented in sections ~\ref{sec:mnk}--\ref{sec:aee}.

\subsection{Simulation setup}
\label{subsec:sim_setup}

For each simulated dataset, six estimators were computed: the MME; CFE; two AEEs initialized from the MME and CFE, denoted $\AEE_{\MME}$ and $\AEE_{\CFE}$; and two MLEs initialized from the same starting points, $\MLE_{\MME}$ and $\MLE_{\CFE}$. To support the parameter space, each estimator is post-processed by enforcing $\hat{m}_1 \geq 1/2$, $\hat{m}_i - \hat{m}_{i-1} \geq \varepsilon$ for $i = 2, \dots, k$, and $\hat{\Omega}_k \geq \varepsilon$, with $\varepsilon = 10^{-6}$. The MLE is obtained by numerically maximizing the log-likelihood in equation~\eqref{eq:loglike} under the same constraints, with the analytical scores in equations~\eqref{eq:score-mj}--\eqref{eq:score-omega} supplied to the optimizer.

The three-parameter configurations designed to span the dimensions of the fading severity and branch count are summarized in Table~\ref{tab:sim_cases}. Case 1 represents a small two-branch system with moderate fading. Case 2 is a three-branch system with mild fading and serves as a benchmark for the regular regime. Case 3 places $m_1$ near the boundary of the parameter space.

\begin{table}[!t]
\centering
\caption{Parameter configurations used in the simulation study.}
\label{tab:sim_cases}
\small
\begin{tabular}{ccccc}
\toprule
Case & $k$ & $\bm{m}$ & $\Omega_k$ & Description \\
\midrule
1 & 2 & $(1.5,\,3.0)$        & $1.0$ & Moderate fading \\
2 & 3 & $(3.0,\,6.0,\,9.0)$  & $2.0$ & Mild fading \\
3 & 2 & $(0.51,\,1.0)$       & $1.5$ & Near boundary \\
\bottomrule
\end{tabular}
\end{table}

For each configuration, $100{,}000$ replications are conducted with sample sizes $n \in \{50,\,100,\,500,\,1{,}000,\,10{,}000\}$. The data generation follows the Mathai-type construction underlying equation~\eqref{eq:mnk-pdf}: independent gamma variates $V_{i,t} \sim \Gammadist(\alpha_i,\,\beta)$ are drawn with $\alpha_i = m_i - m_{i-1}$ and $\beta = \Omega_k / m_k$, partial sums $Z_{i,t} = \sum_{j \leq i} V_{j,t}$ are formed, and $\bm{Y}_t$ is recovered as $Y_{i,t} = \sqrt{Z_{i,t}}$.

Two simulations are performed using this setup. The first verifies whether the asymptotic distributions of the estimators derived in the preceding sections are accurate, by comparing the empirically scaled covariance matrix $n\widehat{\Cov}$ with the theoretical limiting covariance matrix. The results are summarized in Tables~\ref{tab:cov_C1_MME}--\ref{tab:cov_C3_CFE}, where the close agreement between $n\widehat{\Cov}$ and the corresponding theoretical covariance matrices provides a numerical confirmation of the covariance formulas in the preceding theorems.

The second compares the performance of the AEE with that of the MLE and initial estimators. Letting $\hat{\theta}_j^{(i)}$ denote the estimate of the $j$-th scalar component obtained from the $i$-th replicate and $\theta_{0,j}$ its true value, the AB and RMSE are defined as follows:
\begin{equation*}
\AB(\hat{\theta}_j) = \biggl|\frac{1}{N}\sum_{i=1}^N \hat{\theta}_j^{(i)} - \theta_{0,j}\biggr|, \qquad \RMSE(\hat{\theta}_j) = \sqrt{\frac{1}{N}\sum_{i=1}^N \bigl(\hat{\theta}_j^{(i)} - \theta_{0,j}\bigr)^2}.
\end{equation*}

Moreover, to detect subtle differences between the estimators in finite samples, we report the RMSE ratio relative to the MLE initialized from the same starting estimator, denoted as the relative root-mean-squared error (RRMSE). Values close to one indicate MLE-level RMSE performance.

All six estimators yield the same closed-form estimator $\hat{\Omega}_k = \bar{Z}_k$ for the scale parameter, with asymptotic variance $\Omega_k^2 / m_k$ that coincides with the Cram\'er--Rao lower bound. The empirical results confirmed this identity with numerical precision for every cell. Accordingly, we focus on the figures and tables that follow the shape parameters $m_1, \dots, m_k$, where a comparison among the estimators is non-trivial.

\subsection{Verification of asymptotic distributions}
\label{subsec:sim_asy}

Table ~\ref{tab:cov_C1_MME}--\ref{tab:cov_C3_CFE} compares the empirical covariance $n\cdot\widehat{\Cov}$ at $n = 10{,}000$ with the theoretical limits from Theorems~\ref{thm:mme},~\ref{thm:cfe}, and~\ref{thm:AEE} for each configuration. Each table reports the predicted $\bm{\Sigma}_{\MME}$ or $\bm{\Sigma}_{\CFE}$ for the initial estimator, the Cram\'er--Rao bound $\bm{I}_1^{-1}(\bm{\theta})$ shared by the AEE and the MLE, and the corresponding empirical entries.

\begin{table}[!t]
\centering
\caption{Asymptotic variances and covariances of the proposed MME-based estimators for Case 1. Empirical entries are computed as $n\cdot\widehat{\Cov}$ at $n = 10{,}000$.}
\label{tab:cov_C1_MME}
\small
\setlength{\tabcolsep}{4pt}
\begin{tabular}{cccccc}
\toprule
Estimator & $\Var(\hat{m}_{1})$ & $\Var(\hat{m}_{2})$ & $\Cov(\hat{m}_{1},\,\hat{m}_{2})$ & $\Cov(\hat{m}_{1},\,\hat{\Omega}_k)$ & $\Cov(\hat{m}_{2},\,\hat{\Omega}_k)$ \\
\midrule
$\bm{\Sigma}_{\MME}$ & 6.750 & 24.000 & 12.000 & 0.000 & 0.000 \\
$\hat{\bm{\Sigma}}_{\MME}$ & 6.733 & 23.944 & 11.966 & $-0.001$ & 0.001 \\
\midrule
$\bm{I}_1^{-1}$ & 2.400 & 7.459 & 3.729 & 0.000 & 0.000 \\
$\hat{\bm{\Sigma}}_{\AEE_{\MME}}$ & 2.403 & 7.451 & 3.730 & $-0.001$ & 0.000 \\
$\hat{\bm{\Sigma}}_{\MLE_{\MME}}$ & 2.404 & 7.452 & 3.730 & $-0.001$ & 0.000 \\
\bottomrule
\end{tabular}
\end{table}

\begin{table}[!t]
\centering
\caption{Asymptotic variances and covariances of the proposed CFE-based estimators for Case 1. Empirical entries are computed as $n\cdot\widehat{\Cov}$ at $n = 10{,}000$.}
\label{tab:cov_C1_CFE}
\small
\setlength{\tabcolsep}{4pt}
\begin{tabular}{cccccc}
\toprule
Estimator & $\Var(\hat{m}_{1})$ & $\Var(\hat{m}_{2})$ & $\Cov(\hat{m}_{1},\,\hat{m}_{2})$ & $\Cov(\hat{m}_{1},\,\hat{\Omega}_k)$ & $\Cov(\hat{m}_{2},\,\hat{\Omega}_k)$ \\
\midrule
$\bm{\Sigma}_{\CFE}$ & 2.826 & 7.810 & 3.905 & 0.000 & 0.000 \\
$\hat{\bm{\Sigma}}_{\CFE}$ & 2.822 & 7.798 & 3.901 & 0.000 & 0.001 \\
\midrule
$\bm{I}_1^{-1}$ & 2.400 & 7.459 & 3.729 & 0.000 & 0.000 \\
$\hat{\bm{\Sigma}}_{\AEE_{\CFE}}$ & 2.401 & 7.440 & 3.725 & $-0.001$ & 0.000 \\
$\hat{\bm{\Sigma}}_{\MLE_{\CFE}}$ & 2.401 & 7.440 & 3.725 & $-0.001$ & 0.000 \\
\bottomrule
\end{tabular}
\end{table}

\begin{table}[!t]
\centering
\caption{Asymptotic variances and covariances of the proposed MME-based estimators for Case 2. Empirical entries are computed as $n\cdot\widehat{\Cov}$ at $n = 10{,}000$.}
\label{tab:cov_C2_MME}
\small
\setlength{\tabcolsep}{4pt}
\begin{tabular}{cccccc}
\toprule
Entry & $\bm{\Sigma}_{\MME}$ & $\hat{\bm{\Sigma}}_{\MME}$ & $\bm{I}_1^{-1}$ & $\hat{\bm{\Sigma}}_{\AEE_{\MME}}$ & $\hat{\bm{\Sigma}}_{\MLE_{\MME}}$ \\
\midrule
$\Var(\hat{m}_{1})$ & 22.000 & 21.809 & 7.099 & 7.077 & 7.075 \\
$\Var(\hat{m}_{2})$ & 82.000 & 81.446 & 23.333 & 23.364 & 23.349 \\
$\Var(\hat{m}_{3})$ & 180.000 & 179.056 & 48.701 & 48.893 & 48.855 \\
\midrule
$\Cov(\hat{m}_{1},\,\hat{m}_{2})$ & 41.000 & 40.679 & 11.666 & 11.665 & 11.658 \\
$\Cov(\hat{m}_{1},\,\hat{m}_{3})$ & 60.000 & 59.580 & 16.234 & 16.262 & 16.251 \\
$\Cov(\hat{m}_{2},\,\hat{m}_{3})$ & 120.000 & 119.276 & 32.467 & 32.564 & 32.540 \\
\midrule
$\Cov(\hat{m}_{1},\,\hat{\Omega}_k)$ & 0.000 & 0.005 & 0.000 & 0.005 & 0.005 \\
$\Cov(\hat{m}_{2},\,\hat{\Omega}_k)$ & 0.000 & $-0.001$ & 0.000 & $-0.002$ & $-0.002$ \\
$\Cov(\hat{m}_{3},\,\hat{\Omega}_k)$ & 0.000 & $-0.005$ & 0.000 & $-0.008$ & $-0.008$ \\
\bottomrule
\end{tabular}
\end{table}

\begin{table}[!t]
\centering
\caption{Asymptotic variances and covariances of the proposed CFE-based estimators for Case 2. Empirical entries are computed as $n\cdot\widehat{\Cov}$ at $n = 10{,}000$.}
\label{tab:cov_C2_CFE}
\small
\setlength{\tabcolsep}{4pt}
\begin{tabular}{cccccc}
\toprule
Entry & $\bm{\Sigma}_{\CFE}$ & $\hat{\bm{\Sigma}}_{\CFE}$ & $\bm{I}_1^{-1}$ & $\hat{\bm{\Sigma}}_{\AEE_{\CFE}}$ & $\hat{\bm{\Sigma}}_{\MLE_{\CFE}}$ \\
\midrule
$\Var(\hat{m}_{1})$ & 9.571 & 9.567 & 7.099 & 7.073 & 7.074 \\
$\Var(\hat{m}_{2})$ & 26.234 & 26.308 & 23.333 & 23.345 & 23.345 \\
$\Var(\hat{m}_{3})$ & 49.990 & 50.153 & 48.701 & 48.849 & 48.847 \\
\midrule
$\Cov(\hat{m}_{1},\,\hat{m}_{2})$ & 13.117 & 13.154 & 11.666 & 11.656 & 11.656 \\
$\Cov(\hat{m}_{1},\,\hat{m}_{3})$ & 16.663 & 16.744 & 16.234 & 16.248 & 16.248 \\
$\Cov(\hat{m}_{2},\,\hat{m}_{3})$ & 33.326 & 33.433 & 32.467 & 32.536 & 32.534 \\
\midrule
$\Cov(\hat{m}_{1},\,\hat{\Omega}_k)$ & 0.000 & 0.009 & 0.000 & 0.005 & 0.005 \\
$\Cov(\hat{m}_{2},\,\hat{\Omega}_k)$ & 0.000 & 0.003 & 0.000 & $-0.002$ & $-0.002$ \\
$\Cov(\hat{m}_{3},\,\hat{\Omega}_k)$ & 0.000 & $-0.007$ & 0.000 & $-0.009$ & $-0.008$ \\
\bottomrule
\end{tabular}
\end{table}

\begin{table}[!t]
\centering
\caption{Asymptotic variances and covariances of the proposed MME-based estimators for Case 3. Empirical entries are computed as $n\cdot\widehat{\Cov}$ at $n = 10{,}000$.}
\label{tab:cov_C3_MME}
\small
\setlength{\tabcolsep}{4pt}
\begin{tabular}{cccccc}
\toprule
Estimator & $\Var(\hat{m}_{1})$ & $\Var(\hat{m}_{2})$ & $\Cov(\hat{m}_{1},\,\hat{m}_{2})$ & $\Cov(\hat{m}_{1},\,\hat{\Omega}_k)$ & $\Cov(\hat{m}_{2},\,\hat{\Omega}_k)$ \\
\midrule
$\bm{\Sigma}_{\MME}$ & 1.290 & 4.000 & 2.040 & 0.000 & 0.000 \\
$\hat{\bm{\Sigma}}_{\MME}$ & 0.925 & 3.962 & 1.658 & $-0.002$ & $-0.008$ \\
\midrule
$\bm{I}_1^{-1}$ & 0.283 & 0.682 & 0.352 & 0.000 & 0.000 \\
$\hat{\bm{\Sigma}}_{\AEE_{\MME}}$ & 0.269 & 0.691 & 0.345 & 0.005 & 0.009 \\
$\hat{\bm{\Sigma}}_{\MLE_{\MME}}$ & 0.271 & 0.660 & 0.336 & 0.005 & 0.009 \\
\bottomrule
\end{tabular}
\end{table}

\begin{table}[!t]
\centering
\caption{Asymptotic variances and covariances of the proposed CFE-based estimators for Case 3. Empirical entries are computed as $n\cdot\widehat{\Cov}$ at $n = 10{,}000$.}
\label{tab:cov_C3_CFE}
\small
\setlength{\tabcolsep}{4pt}
\begin{tabular}{cccccc}
\toprule
Estimator & $\Var(\hat{m}_{1})$ & $\Var(\hat{m}_{2})$ & $\Cov(\hat{m}_{1},\,\hat{m}_{2})$ & $\Cov(\hat{m}_{1},\,\hat{\Omega}_k)$ & $\Cov(\hat{m}_{2},\,\hat{\Omega}_k)$ \\
\midrule
$\bm{\Sigma}_{\CFE}$ & 0.532 & 0.735 & 0.393 & 0.000 & 0.000 \\
$\hat{\bm{\Sigma}}_{\CFE}$ & 0.464 & 0.731 & 0.363 & 0.003 & 0.007 \\
\midrule
$\bm{I}_1^{-1}$ & 0.283 & 0.682 & 0.352 & 0.000 & 0.000 \\
$\hat{\bm{\Sigma}}_{\AEE_{\CFE}}$ & 0.270 & 0.679 & 0.342 & 0.005 & 0.009 \\
$\hat{\bm{\Sigma}}_{\MLE_{\CFE}}$ & 0.271 & 0.660 & 0.336 & 0.005 & 0.009 \\
\bottomrule
\end{tabular}
\end{table}

In Cases 1 and 2, every empirical entry agrees with the theory to within $0.5\%$, and the AEE and MLE empirical covariances are indistinguishable by up to three decimal places, confirming the asymptotic equivalence asserted in Theorem~\ref{thm:AEE}. The off-diagonal entries connecting the shape parameters to $\Omega_k$, which were predicted to be zero, are empirically reproduced at magnitudes below $0.01$. In Case 3, the entries involving $\hat{m}_1$ are systematically smaller than their theoretical counterparts. This is a finite-sample boundary artifact caused by the constraint $\hat{m}_1 \geq 1/2$ truncating the sampling distribution when the true value lies near the boundary.

\subsection{Performance comparison}
\label{subsec:sim_perf}

Figures~\ref{fig:perf_C1_m1}--\ref{fig:perf_C3_m2} compare the six estimators by using three metrics: AB, RMSE, and RRMSE. Each figure corresponds to one combination of case and parameter and contains a $2 \times 3$ grid; the top row shows the MME-based estimators, and the bottom row shows the CFE-based estimators. The horizontal reference lines in the RRMSE panels indicate the MLE.

For Case 1, represented in Figures~\ref{fig:perf_C1_m1}--\ref{fig:perf_C1_m2}, $\AEE_{\MME}$ and $\AEE_{\CFE}$ both reach the MLE level by $n = 500$ on RMSE and remain there for larger $n$. At $n = 10{,}000$, the RRMSE of the MME settled at approximately $1.67$ for $\hat{m}_1$ and $1.79$ for $\hat{m}_2$, whereas the RRMSE of the CFE was much smaller at $1.08$ and $1.02$. The two AEEs were nearly indistinguishable from the MLE at $n = 10{,}000$.

For Case 2 in Figures~\ref{fig:perf_C2_m1}--\ref{fig:perf_C2_m3}, the same qualitative behavior is observed across all three shape parameters. At $n = 10{,}000$, the RRMSE of the MME ranged from $1.76$ for $\hat{m}_1$ to $1.91$ for $\hat{m}_3$, and the RRMSE of the CFE ranged from $1.16$ for $\hat{m}_1$ to $1.01$ for $\hat{m}_3$. Both AEE variants reached the MLE level by $n = 500$, with $\AEE_{\CFE}$ achieving this slightly earlier than $\AEE_{\MME}$.

The boundary effect is illustrated in Figures~\ref{fig:perf_C3_m1}--\ref{fig:perf_C3_m2} for Case 3, The AB panels reveal a clear finite-sample bias for $\hat{m}_1$ at $n = 50$, which is the largest for the MME ($0.09$) and the smallest for $\AEE_{\MME}$ ($0.02$). The RRMSE values at $n = 10{,}000$ are larger than those in the previous cases; the MME reaches $1.87$ for $\hat{m}_1$ and $2.45$ for $\hat{m}_2$, and the CFE reaches $1.31$ and $1.05$, respectively. $\AEE_{\CFE}$ attains the MLE level by $n = 10{,}000$, whereas $\AEE_{\MME}$ requires comparable sample sizes for $\hat{m}_2$ but shows notable transient behavior at small $n$.

In a few small-sample settings, the empirical RRMSE of the AEE fell slightly below one. $\AEE_{\CFE}$ was less than one in the regular regimes in Cases 1 and 2, with the gap reaching approximately $2.5$--$2.7\%$ at $n = 50$, and $\AEE_{\MME}$ for $\hat{m}_1$ in Case 3 reaches $0.79$ for the same sample size. These small sample differences disappeared as $n$ increased, which is consistent with Theorem~\ref{thm:AEE}. These observations motivate a closer look at the small-sample regime under realistic conditions, in which the i.i.d.\ assumption holds only approximately. Section~\ref{sec:realdata} examines a real wireless channel dataset.

The computational costs of the six estimators were also compared in Case 1, with sample sizes ranging from $n = 100$ to $n = 100{,}000$ and $10$ replications per sample. Figures~\ref{fig:timing_MME}--\ref{fig:timing_CFE} shows the average CPU times for the MME and CFE tracks. The two initial estimators were the cheapest by a wide margin. Adding a one-step Newton update to obtain the AEE leaves the cost essentially unchanged on the log scale, whereas the MLE, which iterates the optimizer until convergence, is markedly slower than the AEE across the range of sample sizes considered. Thus, the AEE achieves the asymptotic efficiency of the MLE without an iterative cost.

\begin{figure*}[!t]
\centering
\hspace*{-.5cm}
\includegraphics[width=1.2\linewidth]{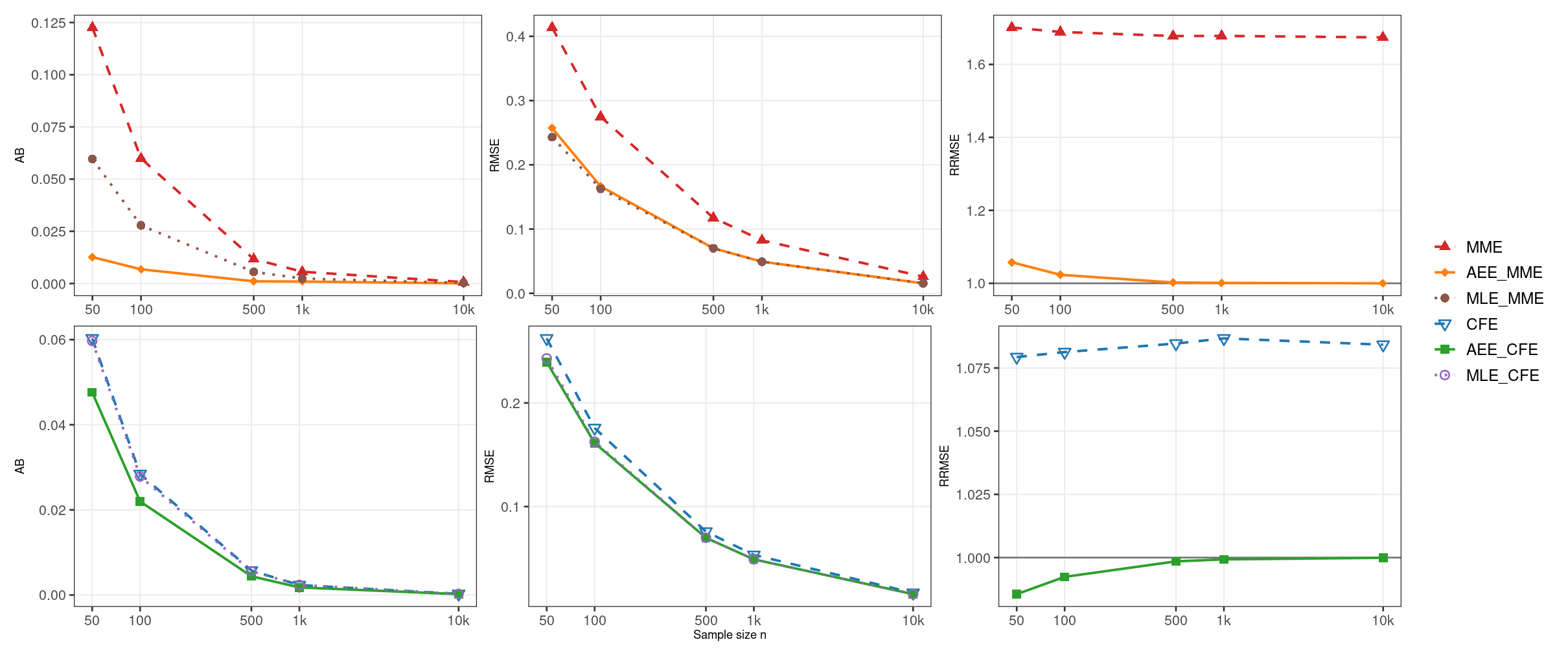}
\caption{Performance comparison for Case 1, parameter $m_1$.}
\label{fig:perf_C1_m1}
\end{figure*}

\begin{figure*}[!t]
\centering
\hspace*{-.5cm}
\includegraphics[width=1.2\linewidth]{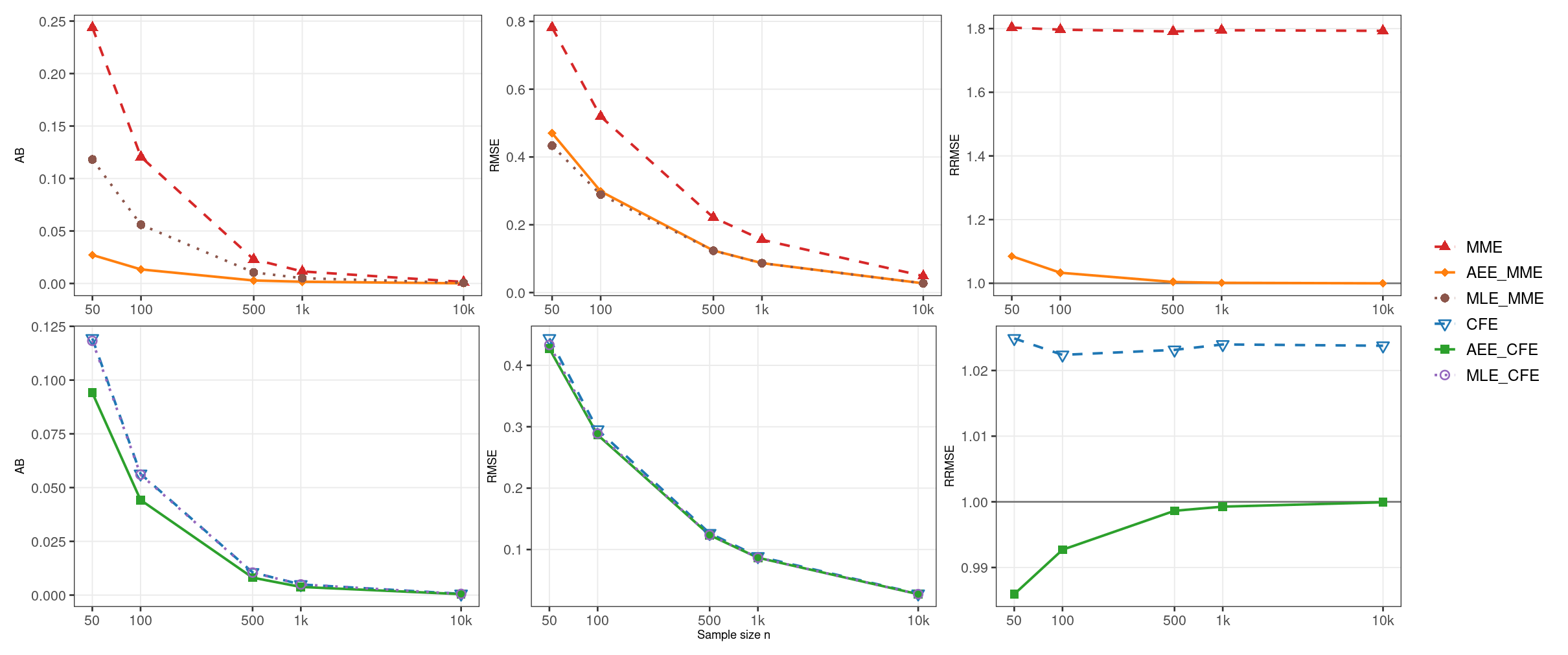}
\caption{Performance comparison for Case 1, parameter $m_2$.}
\label{fig:perf_C1_m2}
\end{figure*}

\begin{figure*}[!t]
\centering
\hspace*{-.5cm}
\includegraphics[width=1.2\linewidth]{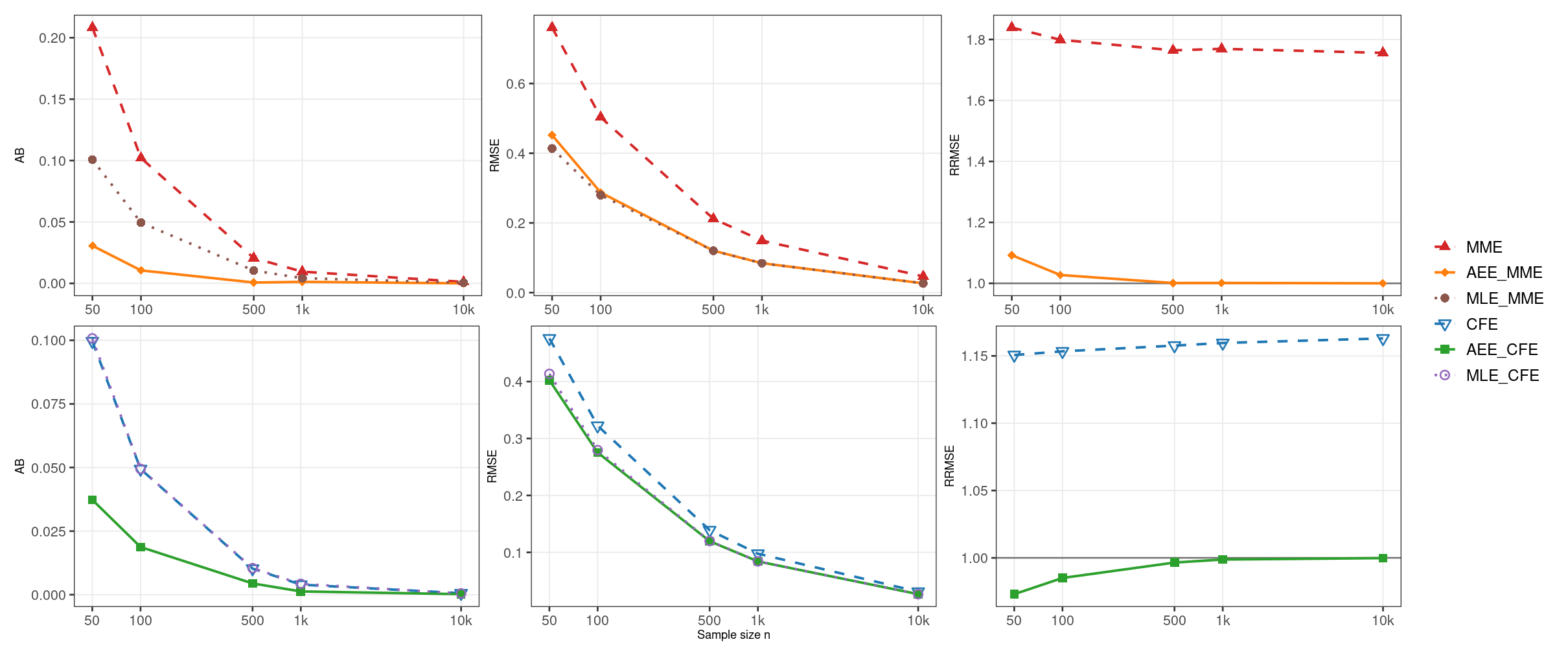}
\caption{Performance comparison for Case 2, parameter $m_1$.}
\label{fig:perf_C2_m1}
\end{figure*}

\begin{figure*}[!t]
\centering
\hspace*{-.5cm}
\includegraphics[width=1.2\linewidth]{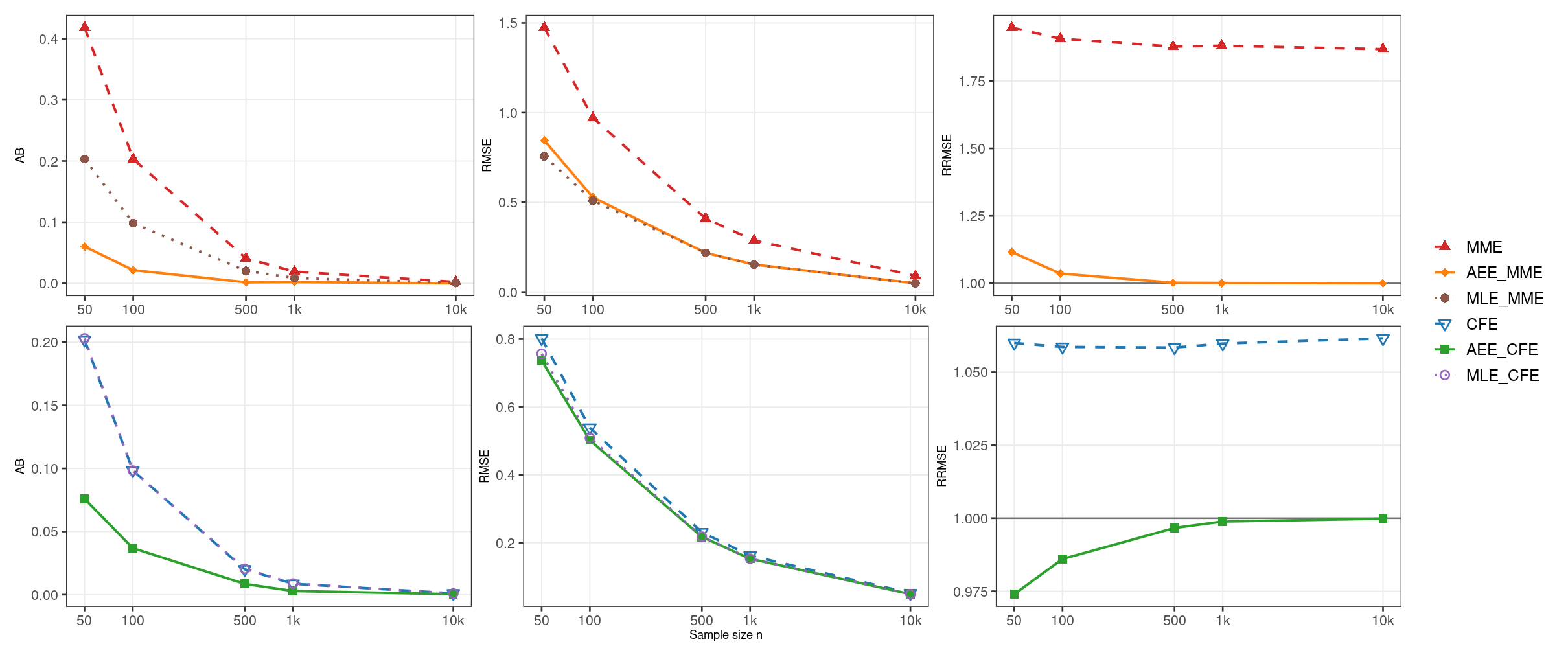}
\caption{Performance comparison for Case 2, parameter $m_2$.}
\label{fig:perf_C2_m2}
\end{figure*}

\begin{figure*}[!t]
\centering
\hspace*{-.5cm}
\includegraphics[width=1.2\linewidth]{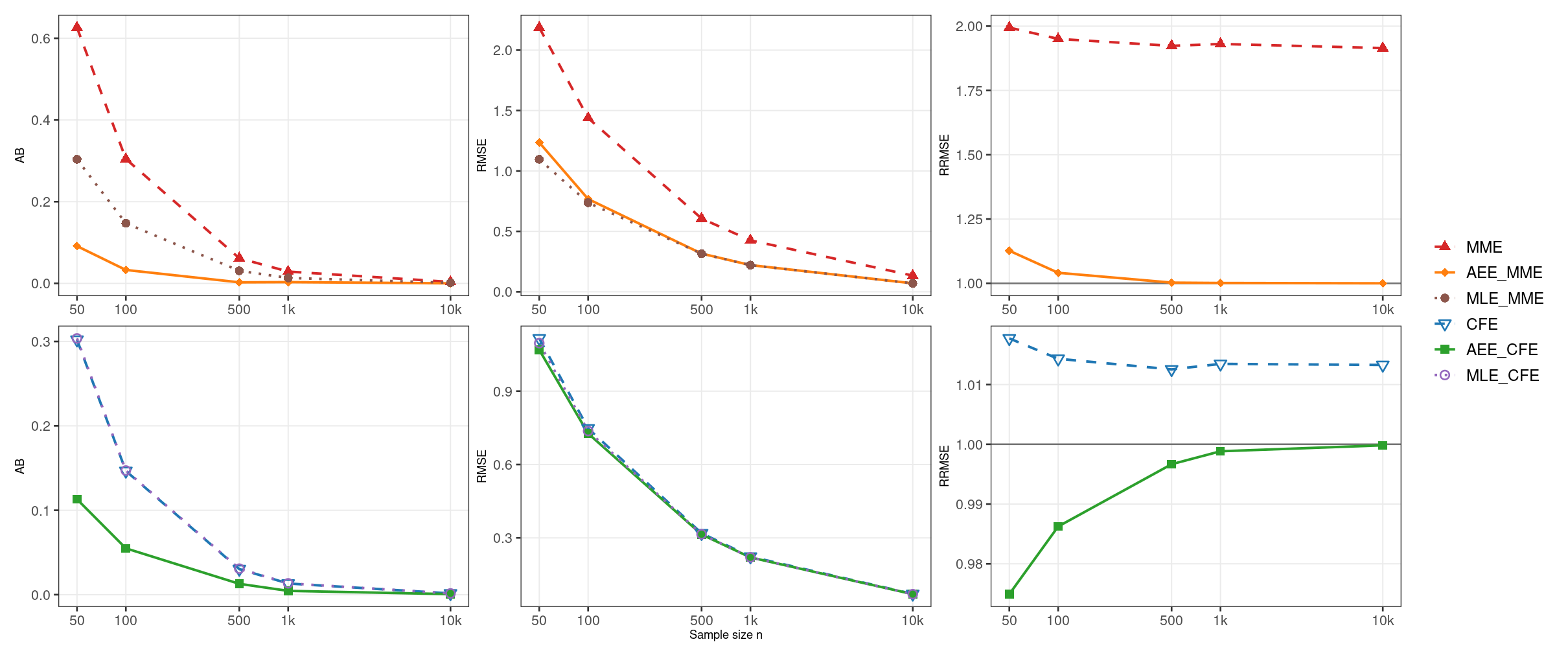}
\caption{Performance comparison for Case 2, parameter $m_3$.}
\label{fig:perf_C2_m3}
\end{figure*}

\begin{figure*}[!t]
\centering
\hspace*{-.5cm}
\includegraphics[width=1.2\linewidth]{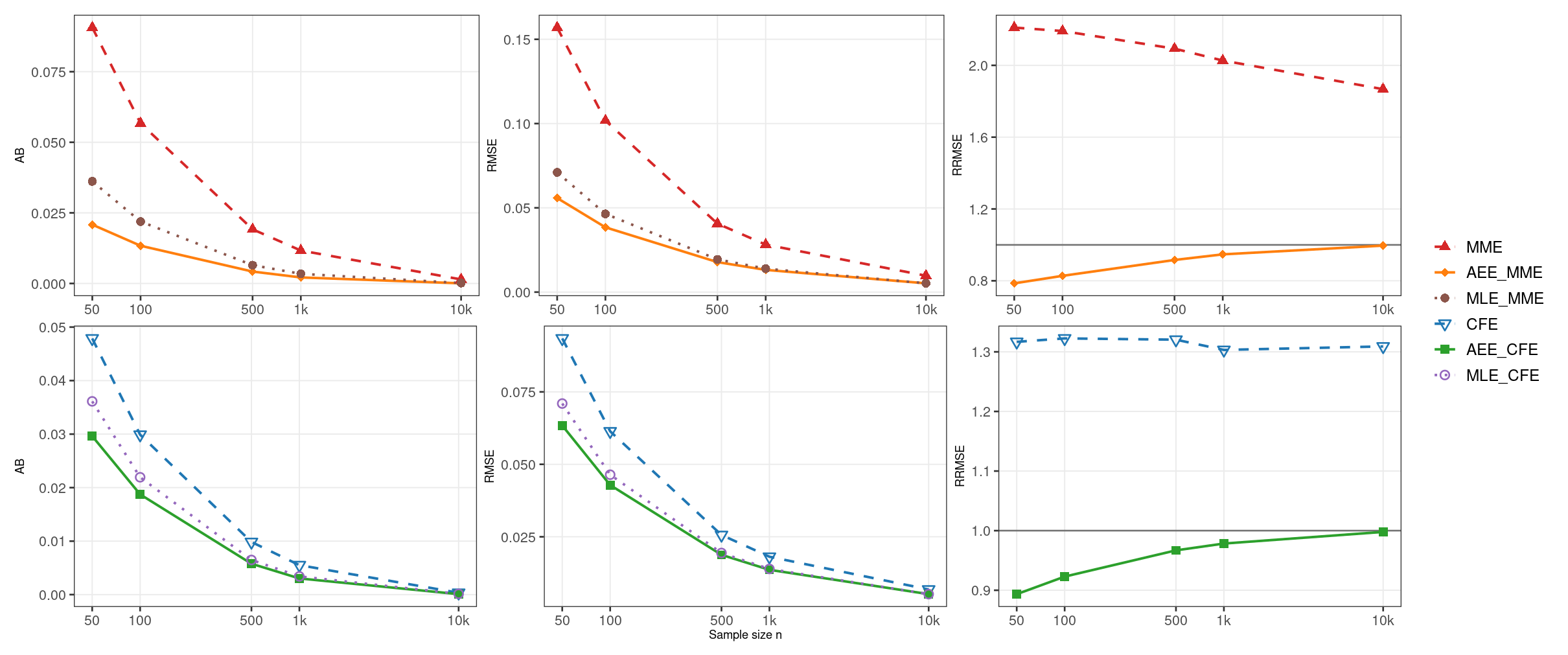}
\caption{Performance comparison for Case 3, parameter $m_1$.}
\label{fig:perf_C3_m1}
\end{figure*}

\begin{figure*}[!t]
\centering
\hspace*{-.5cm}
\includegraphics[width=1.2\linewidth]{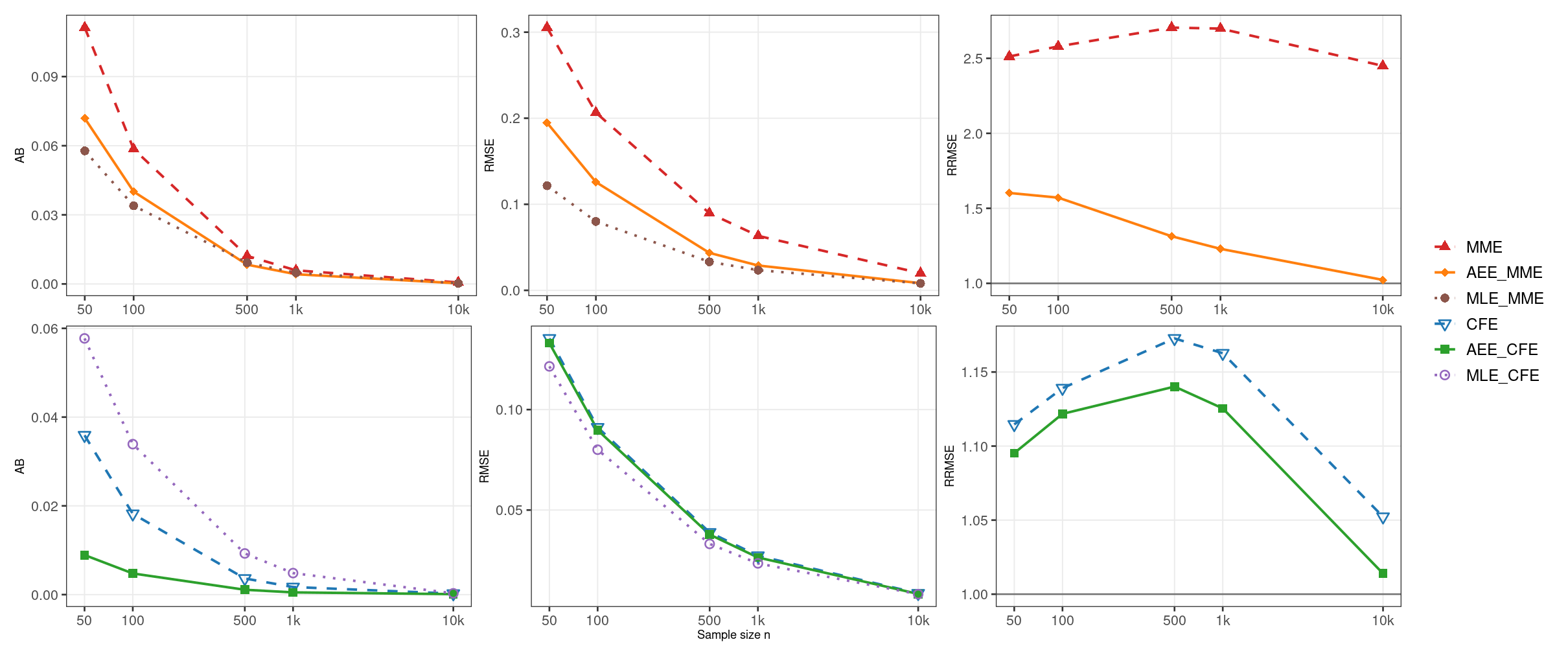}
\caption{Performance comparison for Case 3, parameter $m_2$.}
\label{fig:perf_C3_m2}
\end{figure*}

\begin{figure}[!t]
\centering
\begin{subfigure}{0.49\linewidth}
  \centering
  \includegraphics[width=\linewidth]{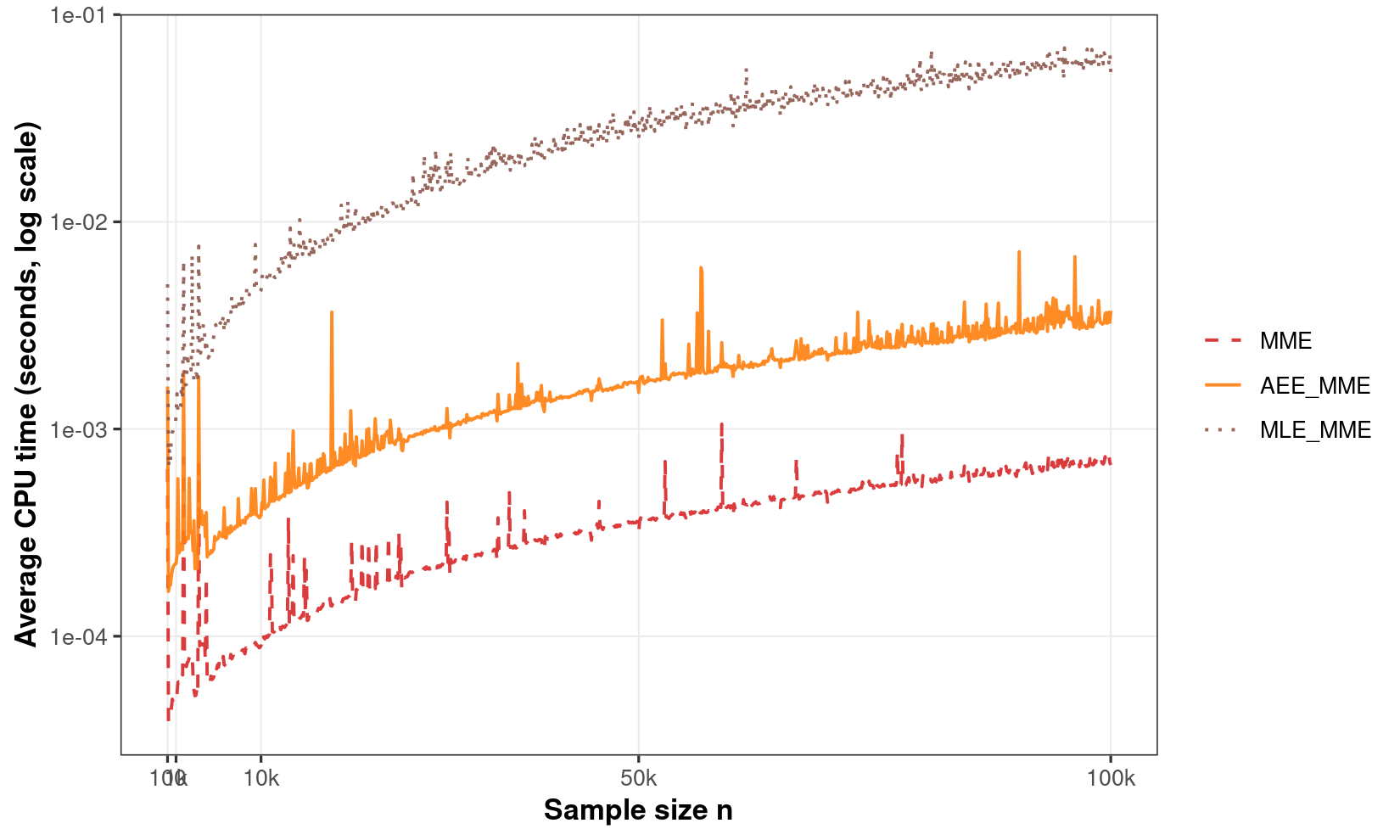}
  \caption{MME track}
  \label{fig:timing_MME}
\end{subfigure}
\hfill
\begin{subfigure}{0.49\linewidth}
  \centering
  \includegraphics[width=\linewidth]{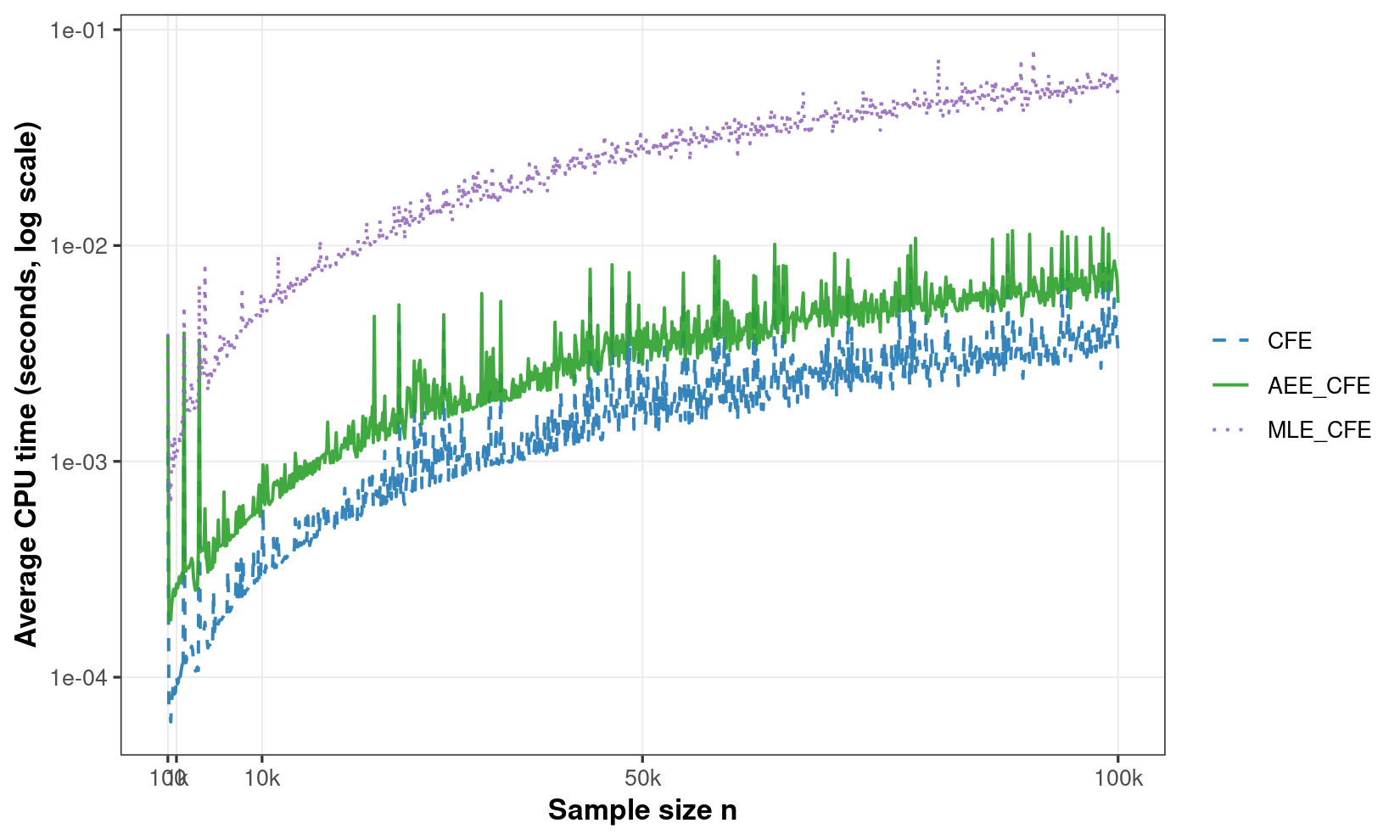}
  \caption{CFE track}
  \label{fig:timing_CFE}
\end{subfigure}
\caption{Average computational times of the estimators under parameter setting identical to Case 1.}
\label{fig:timing}
\end{figure}
\section{Real data analysis}
\label{sec:realdata}

The simulation study in Section~\ref{sec:simulation} demonstrates the asymptotic and finite sample properties of the proposed estimators under controlled conditions. To complement this with evidence from a real wireless channel, this section applies the estimators to indoor Massive Multiple-Input Multiple-Output (MaMIMO) measurements, and assesses the goodness-of-fit of the resulting MNK distributions. Following \citet{hickernell1998generalized} and \citet{liang2001testing}, we employ three discrepancy-based statistics for multivariate goodness-of-fit: symmetric discrepancy ($SD^2$), centered discrepancy ($CD^2$), and wraparound discrepancy ($WD^2$). For each fitted MNK model, the observed data were first transformed into a unit cube using fitted gamma increments. Specifically, for an estimator $\hat{\bm{\theta}}$, define $\hat{\alpha}_j = \hat{m}_j - \hat{m}_{j-1}$ and $\hat{\beta} = \frac{\hat{\Omega}_k}{\hat{m}_k}$ with $\hat{m}_0 := 0$, and set
\[
w_{ij}
=
F_{\Gamma}\!\left(
Y_{ij}^2 - Y_{i,j-1}^2;\,
\hat{\alpha}_j,\hat{\beta}
\right),
\qquad i=1,\ldots,n,\quad j=1,\ldots,k,
\]
where $F_{\Gamma}(\cdot;\alpha,\beta)$ denotes the gamma cumulative distribution function of shape $\alpha$ and scale $\beta$. In the fitted MNK model, the transformed variables are approximately independent uniforms of $[0,1]$.
These three statistics admit the formulae
\begin{align}
\label{eq:SD2}
SD^2 &= \left(\tfrac{4}{3}\right)^k - \frac{2}{n} \sum_{i=1}^n \prod_{j=1}^k \bigl( 1 + 2 w_{ij} - 2 w_{ij}^2 \bigr) \nonumber \\
&\quad + \frac{2^k}{n^2} \sum_{i=1}^n \sum_{l=1}^n \prod_{j=1}^k \bigl( 1 - |w_{ij} - w_{lj}| \bigr), \\
\label{eq:CD2}
CD^2 &= \left(\tfrac{13}{12}\right)^k - \frac{2}{n} \sum_{i=1}^n \prod_{j=1}^k \Bigl( 1 + \tfrac{1}{2} | w_{ij} - \tfrac{1}{2} | - \tfrac{1}{2} | w_{ij} - \tfrac{1}{2} |^2 \Bigr) \nonumber \\
&\quad + \frac{1}{n^2} \sum_{i=1}^n \sum_{l=1}^n \prod_{j=1}^k \Bigl( 1 + \tfrac{1}{2} | w_{ij} - \tfrac{1}{2} | + \tfrac{1}{2} | w_{lj} - \tfrac{1}{2} | - \tfrac{1}{2} | w_{ij} - w_{lj} | \Bigr), \\
\label{eq:WD2}
WD^2 &= - \left(\tfrac{4}{3}\right)^k + \frac{1}{n^2} \sum_{i=1}^n \sum_{l=1}^n \prod_{j=1}^k \Bigl( \tfrac{3}{2} - | w_{ij} - w_{lj} | \bigl( 1 - | w_{ij} - w_{lj} | \bigr) \Bigr).
\end{align}
Because the parameters of the MNK distribution were estimated from the data, the asymptotic distribution of each statistic was intractable. Therefore, we computed $p$-values using parametric a bootstrap method with $B=1999$ replications, following \citet{chiu2009generalized}. For each fitted model and bootstrap replication, a sample was generated from the fitted MNK distribution, the parameter was re-estimated using the same estimator, the transformed variables $w_{ij}$ recomputed, and the corresponding discrepancy statistics evaluated. The bootstrap $p$-value was then computed as the proportion of bootstrap statistics that exceeded the observed statistic.

\subsection{Description of data}
\label{subsec:realdata_setup}

\subsubsection{Data description}

We assessed the empirical fit of the MNK distribution using a publicly available \emph{Ultra-Dense Indoor MaMIMO CSI Dataset} \citep{nr6k-8r78-21}. The dataset was collected on a $64$-antenna base station at carrier frequency of $f_c = 2.61$\,GHz with $100$ Orthogonal Frequency Division Multiplexing (OFDM) subcarriers and a complex channel response $\bm{H} \in \mathbb{C}^{64 \times 100}$ at each of $252{,}004$ user positions. The full measurement protocol, including the antenna topologies and carrier and bandwidth configurations, was documented in \citet{de2020csi}.

The dataset contains four scenarios that combine two antenna topologies, a uniform rectangular array (URA) and a uniform linear array (ULA), with two propagation conditions, line-of-sight (LoS) and non-line-of-sight (nLoS). In the following analysis, we focus on the \textit{URA\_lab\_nLoS} and \textit{ULA\_lab\_LoS}. The two scenarios exhibit contrasting fading regimes. In \textit{nLoS}, the received envelope is dominated by scattered multipath components, whereas in \textit{LoS}, a strong direct path attenuates fading severity. They provided a stringent test of the MNK model under qualitatively different propagation conditions.

\subsubsection{Constructing the MNK input matrix}

Across the four scenarios, the dataset has three natural indexing axes: $64$ antennas, $100$ subcarriers, and $252{,}004$ positions per scenario. Continuing with the small-sample focus introduced in Section~\ref{sec:simulation}, we subsample each. First, the $20$-MHz transmission band is divided by OFDM into $100$ narrow subcarriers that experience approximately independent fading. Therefore, we treat each subcarrier as an independent realization of the experiment. Second, uniformly spaced positions are drawn from the dense grid. Third, $k = 2$ antennas are sampled for each pair of scenarios and subcarrier.

Let $H_i[a,s]$ denote the complex channel coefficients at position $i$, antenna $a$, and subcarrier $s$, and let $k$ be the number of selected antennas. We build the input matrix $\bm{Y} \in \mathbb{R}^{n \times k}$ by cumulating per-antenna power and taking the square root:
\begin{equation}
\label{eq:Y_construction}
Y_{ij} = \sqrt{\sum_{l=1}^{j} \frac{|H_i[a_l, s]|^2}{\bar{P}_{a_l,s}}}, \qquad j = 1, \ldots, k,
\end{equation}
where $i = 1, \ldots, n$ indicates the selected positions, and $\bar{P}_{a_l,s} = n^{-1}\sum_{r=1}^n |H_r[a_l,s]|^2$ is the normalizing constant.

The construction in equation~\eqref{eq:Y_construction} mirrors the standard structure of wireless communications. With multiple receiver antennas, MRC is the usual linear combiner that weights each branch by its instantaneous channel gain to maximize the output SNR \citep{aalo1995performance}. Under independent Nakagami-$m$ fading on each branch, the resulting MRC output is itself Nakagami-$m$, with shape and scale parameters equal to the sum of those of the constituent branches. This is the partial sum of the gamma structure that defines the MNK distribution in Section~\ref{sec:mnk}. Each $Y_{ij}$ in equation~\eqref{eq:Y_construction} admits a direct physical interpretation as the envelope of an MRC output that combines the first $j$ antennas at position $i$ such that the joint vector $(Y_{i1}, \dots, Y_{ik})^{\top}$ is, under the Nakagami-$m$ fading hypothesis, exactly a sample from the MNK distribution. The goodness-of-fit analysis in Section~\ref{subsec:realdata_results} tests this hypothesis by using real measurements.

\subsection{Result analysis}
\label{subsec:realdata_results}

For the URA\_lab\_nLoS cell reported in Table~\ref{tab:detail_URAlabnLoS_sc24}, all six estimators yield $p$-values above $0.05$ for every statistic. Among the six fitted models and three statistics, $\AEE_{\CFE}$ attained the largest overall bootstrap $p$-value of $0.805$ for $CD^2$. The $SD^2$, $CD^2$, and $WD^2$ test statistics were $0.1475$, $0.0547$, and $0.0804$ with corresponding $p$-values of $0.749$, $0.805$, and $0.758$, respectively. For the ULA\_lab\_LoS cell reported in Table~\ref{tab:detail_ULAlabLoS_sc59}, $\MLE_{\MME}$, $\MLE_{\CFE}$, and $\AEE_{\MME}$ are rejected at the $5\%$ level for at least one statistic, while $\MME$, $\CFE$, and $\AEE_{\CFE}$ fail to reject the fitted MNK model for all three statistics. $\AEE_{\CFE}$ again attained the largest $p$-value for each: the $SD^2$, $CD^2$, and $WD^2$ test statistics were $0.1946$, $0.0832$, and $0.1030$ with corresponding $p$-values of $0.423$, $0.384$, and $0.436$, respectively.

\begin{table}[!t]
\centering
\caption{Detailed results for \textit{URA\_lab\_nLoS}. For each goodness-of-fit statistic, the subscript $p$ denotes its parametric bootstrap $p$-value and $T$ denotes the observed value computed from the data.}
\label{tab:detail_URAlabnLoS_sc24}
\small
\setlength{\tabcolsep}{4pt}
\begin{tabular}{ccccccc}
\toprule
 & $\MME$ & $\AEE_{\MME}$ & $\MLE_{\MME}$ & $\CFE$ & $\AEE_{\CFE}$ & $\MLE_{\CFE}$ \\
\midrule
Log-likelihood        & 7.5801 & 8.0293 & 8.0405 & 6.8205 & 8.0174 & 8.0405 \\
Estimated $m_1$       & 3.6148 & 3.0847 & 3.1531 & 2.9292 & 3.0749 & 3.1531 \\
Estimated $m_2$       & 7.3026 & 6.2723 & 6.4049 & 6.4997 & 6.2228 & 6.4049 \\
Elapsed time (ms)     & 0.07   & 0.35   & 0.70   & 0.06   & 0.19   & 0.50   \\
\midrule
$SD^2_p$              & 0.4360 & 0.6345 & 0.6030 & 0.8000 & 0.7485 & 0.6070 \\
$CD^2_p$              & 0.5990 & 0.7030 & 0.6690 & 0.7360 & 0.8045 & 0.6840 \\
$WD^2_p$              & 0.5410 & 0.6500 & 0.5960 & 0.7080 & 0.7580 & 0.6160 \\
\midrule
$SD^2_T$              & 0.1674 & 0.1464 & 0.1482 & 0.1935 & 0.1475 & 0.1482 \\
$CD^2_T$              & 0.0614 & 0.0550 & 0.0553 & 0.0920 & 0.0547 & 0.0553 \\
$WD^2_T$              & 0.0911 & 0.0810 & 0.0817 & 0.0952 & 0.0804 & 0.0817 \\
\bottomrule
\end{tabular}
\end{table}

\begin{table}[!t]
\centering
\caption{Detailed results for \textit{ULA\_lab\_LoS}. For each goodness-of-fit statistic, the subscript $p$ denotes its parametric bootstrap $p$-value and $T$ denotes the observed value computed from the data.}
\label{tab:detail_ULAlabLoS_sc59}
\small
\setlength{\tabcolsep}{4pt}
\begin{tabular}{ccccccc}
\toprule
 & $\MME$ & $\AEE_{\MME}$ & $\MLE_{\MME}$ & $\CFE$ & $\AEE_{\CFE}$ & $\MLE_{\CFE}$ \\
\midrule
Log-likelihood        & 29.2243 & 29.3048 & 29.3049 & 18.8585 & 26.5342 & 29.3049 \\
Estimated $m_1$       &  5.2665 &  5.0564 &  5.0690 &  4.2582 &  4.1253 &  5.0690 \\
Estimated $m_2$       & 10.6638 & 10.4081 & 10.4295 & 10.9061 &  7.9149 & 10.4295 \\
Elapsed time (ms)     &  0.03   &  0.18   &  0.42   &  0.03   &  0.12   &  0.57   \\
\midrule
$SD^2_p$              & 0.2415  & 0.1000  & 0.0780  & 0.2620  & 0.4225  & 0.0845  \\
$CD^2_p$              & 0.1850  & 0.0620  & 0.0405  & 0.2550  & 0.3835  & 0.0395  \\
$WD^2_p$              & 0.1095  & 0.0495  & 0.0345  & 0.2010  & 0.4355  & 0.0390  \\
\midrule
$SD^2_T$              & 0.1836  & 0.1937  & 0.1935  & 0.5247  & 0.1946  & 0.1935  \\
$CD^2_T$              & 0.0766  & 0.0810  & 0.0809  & 0.2406  & 0.0832  & 0.0809  \\
$WD^2_T$              & 0.1178  & 0.1200  & 0.1200  & 0.1856  & 0.1030  & 0.1200  \\
\bottomrule
\end{tabular}
\end{table}

\section{Conclusion}
\label{sec:conclusion}

In this study, we propose an MNK distribution whose joint density admits a fully closed-form expression derived through a Mathai-type construction. Because the MLE is not available in closed form, we develop two closed-form initial estimators, the MME and CFE, and establish their $\sqrt{n}$-consistency and asymptotic normality. While these two estimators do not attain the Cram\'er--Rao lower bound, we further constructed an asymptotically efficient closed-form estimator by applying a one-step Newton--Raphson update to either of them.

A Monte Carlo simulation study supported the asymptotic theory across three-parameter configurations covering moderate fading, mild fading, and a near-boundary regime. The proposed AEE attained essentially the same RMSE as the numerically computed MLE in every configuration while consistently outperforming the two initial estimators. In terms of the computational cost, the AEE was substantially faster than the MLE because adding a single Newton step to the closed-form initial estimator avoids the iterative optimization required by the MLE.

We further illustrated the practical utility of the proposed framework on the MaMIMO Dataset. The MNK fit was assessed using three discrepancy-based goodness-of-fit statistics with parametric bootstrap $p$-values. In the two scenarios examined, the fitted MNK models generally provided plausible goodness-of-fit results and the AEE based on the CFE yielded competitive likelihood values and favorable discrepancy-based bootstrap $p$-values.

\bibliographystyle{apalike}
\bibliography{reference}

@article{aalo1995performance,
  title={Performance of maximal-ratio diversity systems in a correlated Nakagami-fading environment},
  author={Aalo, Valentine A},
  journal={IEEE transactions on Communications},
  volume={43},
  number={8},
  pages={2360--2369},
  year={1995},
  publisher={IEEE}
}

@article{brouste2025one,
  title={One-step statistical estimation method for generalized linear models},
  author={Brouste, Alexandre and Hovsepyan, Lilit and Votsi, Irene},
  journal={Statistical Papers},
  volume={66},
  number={6},
  pages={133},
  year={2025},
  publisher={Springer}
}

@article{cheng2002generalized,
  title={Generalized moment estimators for the Nakagami fading parameter},
  author={Cheng, Julian and Beaulieu, Norman C},
  journal={IEEE communications letters},
  volume={6},
  number={4},
  pages={144--146},
  year={2002},
  publisher={IEEE}
}

@article{chiu2009generalized,
  title={Generalized Cram{\'e}r--von Mises goodness-of-fit tests for multivariate distributions},
  author={Chiu, Sung Nok and Liu, Kwong Ip},
  journal={Computational Statistics \& Data Analysis},
  volume={53},
  number={11},
  pages={3817--3834},
  year={2009},
  publisher={Elsevier}
}

@article{clarke2012fast,
  title={A fast robust method for fitting gamma distributions},
  author={Clarke, Brenton R and McKinnon, Peter L and Riley, Geoff},
  journal={Statistical papers},
  volume={53},
  number={4},
  pages={1001--1014},
  year={2012},
  publisher={Springer}
}

@inproceedings{de2020csi,
  title={CSI-based positioning in massive MIMO systems using convolutional neural networks},
  author={De Bast, Sibren and Guevara, Andrea P and Pollin, Sofie},
  booktitle={2020 IEEE 91st Vehicular Technology Conference (VTC2020-Spring)},
  pages={1--5},
  year={2020},
  organization={IEEE}
}

@article{hickernell1998generalized,
  title={A generalized discrepancy and quadrature error bound},
  author={Hickernell, Fred},
  journal={Mathematics of computation},
  volume={67},
  number={221},
  pages={299--322},
  year={1998}
}

@article{ho2023asymptotically,
  title={An asymptotically efficient closed-form estimator for the Dirichlet distribution},
  author={Chang, Jae Ho and Lee, Sang Kyu and Kim, Hyoung-Moon},
  journal={Stat},
  volume={12},
  number={1},
  pages={e640},
  year={2023},
  publisher={Wiley Online Library}
}

@article{jang2023new,
  title={New closed-form efficient estimator for multivariate gamma distribution},
  author={Jang, Yu-Hyeong and Zhao, Jun and Kim, Hyoung-Moon and Yu, Kyusang and Kwon, Sunghoon and Kim, SungHwan},
  journal={Statistica Neerlandica},
  volume={77},
  number={4},
  pages={555--572},
  year={2023},
  publisher={Wiley Online Library}
}

@article{jin2025composite,
  title={Composite quantile regression for a distributed system with non-randomly distributed data},
  author={Jin, Jun and Hao, Chenyan and Chen, Yewen},
  journal={Statistical Papers},
  volume={66},
  number={1},
  pages={1},
  year={2025},
  publisher={Springer}
}

@article{karagiannidis2003multivariate,
  title={On the multivariate Nakagami-m distribution with exponential correlation},
  author={Karagiannidis, George K and Zogas, Dimitris A and Kotsopoulos, Stavros A},
  journal={IEEE Transactions on Communications},
  volume={51},
  number={8},
  pages={1240--1244},
  year={2003},
  publisher={IEEE}
}

@article{kim2023new,
  title={New closed-form efficient estimators for a bivariate Weibull distribution},
  author={Kim, Hyoung-Moon and Jang, Yu-Hyeong and Arnold, Barry C and Kim, Yu-Kwang},
  journal={Journal of Statistical Computation and Simulation},
  volume={93},
  number={11},
  pages={1716--1733},
  year={2023},
  publisher={Taylor \& Francis}
}

@article{kim2024new,
  title={New efficient estimators for the Weibull distribution},
  author={Kim, Hyoung-Moon and Jang, Yu-Hyeong and Arnold, Barry C and Zhao, Jun},
  journal={Communications in Statistics-Theory and Methods},
  volume={53},
  number={13},
  pages={4576--4601},
  year={2024},
  publisher={Taylor \& Francis}
}

@inproceedings{le1956asymptotic,
  title={On the asymptotic theory of estimation and testing hypotheses},
  author={Le Cam, Lucien},
  booktitle={Proceedings of the Third Berkeley Symposium on Mathematical Statistics and Probability, Volume 1: Contributions to the Theory of Statistics},
  volume={3},
  pages={129--157},
  year={1956},
  organization={University of California Press}
}

@article{lee2025comprehensive,
  title={A comprehensive estimator for the Fr{\'e}chet distribution: asymptotical efficiency, and practical applications to health studies},
  author={Lee, Sang Kyu and Hong, Hyokyoung G and Kim, Hyoung-Moon},
  journal={Journal of the Korean Statistical Society},
  volume={54},
  number={3},
  pages={808--824},
  year={2025},
  publisher={Springer}
}

@book{lehmann1998theory,
  title={Theory of point estimation},
  author={Lehmann, Erich Leo and Casella, George},
  year={1998},
  publisher={Springer}
}

@article{liang2001testing,
  title={Testing multivariate uniformity and its applications},
  author={Liang, Jia-Juan and Fang, Kai-Tai and Hickernell, Fred and Li, Runze},
  journal={Mathematics of Computation},
  volume={70},
  number={233},
  pages={337--355},
  year={2001}
}

@article{mathal1992form,
  title={A form of multivariate gamma distribution},
  author={Mathal, AM and Moschopoulos, PG},
  journal={Annals of the Institute of Statistical Mathematics},
  volume={44},
  number={1},
  pages={97--106},
  year={1992},
  publisher={Springer}
}

@incollection{nakagami1960m,
  title={The m-distribution—A general formula of intensity distribution of rapid fading},
  author={Nakagami, Minoru},
  booktitle={Statistical methods in radio wave propagation},
  pages={3--36},
  year={1960},
  publisher={Elsevier}
}

@article{nascimento2023divergence,
  title={Divergence-based tests for the bivariate gamma distribution applied to polarimetric synthetic aperture radar},
  author={Nascimento, Abra{\~a}o and Ferreira, Jodavid and Silva, Alisson},
  journal={Statistical Papers},
  volume={64},
  number={5},
  pages={1439--1463},
  year={2023},
  publisher={Springer}
}

@data{nr6k-8r78-21,
  doi = {10.21227/nr6k-8r78},
  url = {https://dx.doi.org/10.21227/nr6k-8r78},
  author = {Sibren De Bast and Sofie Pollin},
  publisher = {IEEE Dataport},
  title = {Ultra Dense Indoor MaMIMO CSI Dataset},
  year = {2021}
}

@article{ye2017closed,
  title={Closed-form estimators for the gamma distribution derived from likelihood equations},
  author={Ye, Zhi-Sheng and Chen, Nan},
  journal={The American Statistician},
  volume={71},
  number={2},
  pages={177--181},
  year={2017},
  publisher={Taylor \& Francis}
}

@article{zhao2021closed,
  title={Closed-form estimators and bias-corrected estimators for the Nakagami distribution},
  author={Zhao, Jun and Kim, SungBum and Kim, Hyoung-Moon},
  journal={Mathematics and Computers in Simulation},
  volume={185},
  pages={308--324},
  year={2021},
  publisher={Elsevier}
}

@article{zhao2022closed,
  title={Closed-form and bias-corrected estimators for the bivariate gamma distribution},
  author={Zhao, Jun and Jang, Yu-Hyeong and Kim, Hyoung-Moon},
  journal={Journal of Multivariate Analysis},
  volume={191},
  pages={105009},
  year={2022},
  publisher={Elsevier}
}

@article{zhao2023new,
  title={New closed-form efficient estimators for the negative binomial distribution},
  author={Zhao, Jun and Kim, Hyoung-Moon},
  journal={Statistical Papers},
  volume={64},
  number={6},
  pages={2119--2135},
  year={2023},
  publisher={Springer}
}

@article{zhao2025new,
  title={New and fast closed-form efficient estimators for the negative multinomial distribution},
  author={Zhao, Jun and Lee, Yun-beom and Kim, Hyoung-Moon},
  journal={Communications in Statistics-Theory and Methods},
  volume={54},
  number={20},
  pages={6684--6699},
  year={2025},
  publisher={Taylor \& Francis}
}

%APPENDIX
\appendix
\section{Auxiliary lemmas}
\label{app:lemmas}

This appendix presents the auxiliary identities and inequalities used in the proofs in Appendices~\ref{sec:app-B} and~\ref{sec:app-C}.

\begin{lemma}[Gamma log-moment identities]
\label{lem:gamma-moments}
Let $X \sim \Gammadist(\alpha, \beta)$ with shape $\alpha > 0$ and scale $\beta > 0$ and set $A := \psi(\alpha) + \log \beta$. Then
\begin{enumerate}[label=\textup{(\roman*)}]
\item $\E[\log X] = A$,
\item $\Var(\log X) = \psi^{(1)}(\alpha)$,
\item $\Cov(\log X, X) = \beta$,
\item $\Cov(X, X \log X) = \beta^2 (\alpha A + \alpha + 1)$,
\item $\Cov(\log X, X \log X) = \beta (A + \alpha \psi^{(1)}(\alpha))$,
\item $\Var(X \log X) = \beta^2 [\alpha A^2 + 2(\alpha + 1) A + \alpha (\alpha + 1) \psi^{(1)}(\alpha) + 1]$.
\end{enumerate}
\begin{proof}
The Mellin transform $M(s) := \E[X^s] = \beta^s \Gamma(\alpha + s)/\Gamma(\alpha)$ is analytic in the neighborhood of $s = 0$, and differentiation under the integral sign yields $M^{(r)}(s) = \E[X^s (\log X)^r]$ for $r = 0, 1, 2$. By setting $\phi(s) := \log M(s)$, we obtain
\begin{equation*}
\phi'(s) = \log\beta + \psi(\alpha + s), \qquad \phi''(s) = \psi^{(1)}(\alpha + s),
\end{equation*}
where $M'(s) = M(s)\phi'(s)$ and $M''(s) = M(s)\{[\phi'(s)]^2 + \phi''(s)\}$. The recurrences $\psi(\alpha + s + 1) = \psi(\alpha + s) + 1/(\alpha + s)$ and $\psi^{(1)}(\alpha + s + 1) = \psi^{(1)}(\alpha + s) - 1/(\alpha + s)^2$ will be used freely below. Evaluating at $s = 0, 1, 2$ yields the moments
\begin{align*}
\E[\log X] &= \phi'(0) = A, \\
\E[X \log X] &= M'(1) = \beta(\alpha A + 1), \\
\E[X^2 \log X] &= M'(2) = \beta^2[\alpha(\alpha + 1) A + 2\alpha + 1], \\
\E[(\log X)^2] &= [\phi'(0)]^2 + \phi''(0) = A^2 + \psi^{(1)}(\alpha), \\
\E[X (\log X)^2] &= M''(1) = \beta[\alpha A^2 + 2 A + \alpha \psi^{(1)}(\alpha)], \\
\E[X^2 (\log X)^2] &= M''(2) = \beta^2\bigl[\alpha(\alpha + 1) A^2 + 2 A(2\alpha + 1) + 2 + \alpha(\alpha + 1) \psi^{(1)}(\alpha)\bigr].
\end{align*}

Identities (i)--(vi) are now followed by the direct subtraction of the appropriate products of the lower-order moments.
\end{proof}
\end{lemma}

\begin{lemma}[Trigamma inequality]
\label{lem:trigamma}
For $x > 0$, the trigamma function $\psi^{(1)}(x)$ satisfies:
\begin{equation*}
\psi^{(1)}(x) > \frac{1}{x}.
\end{equation*}
\begin{proof}
Let $f(t) = (x + t)^{-2}$ for $t \geq 0$. As $f$ strictly decreases, $f(n) > \int_n^{n+1} f(t)\,dt$ for all $n \geq 0$. Summing over $n$ gives
\begin{equation*}
\psi^{(1)}(x) = \sum_{n=0}^\infty (x + n)^{-2} > \int_0^\infty (x + t)^{-2}\,dt = \frac{1}{x}. \qedhere
\end{equation*}
\end{proof}
\end{lemma}

\begin{lemma}[Weighted sum identity]
\label{lem:weighted-sum}
With $G_r$ and $u_{r, i}$ as defined in equation~\eqref{eq:Gr-def}, $\sum_{i=1}^k u_{r, i}(\alpha_i A_i + 1) = -G_r m_k$ for every $r = 1, \dots, k$.
\begin{proof}
Splitting the summand as
\begin{equation*}
u_{r, i}(\alpha_i A_i + 1) = (\ind\{i \leq r\} - G_r)\alpha_i + \frac{\ind\{i \leq r\} - G_r}{A_i},
\end{equation*}
and summing each piece separately yields:
\begin{equation*}
\sum_{i=1}^k (\ind\{i \leq r\} - G_r)\alpha_i = m_r - G_r m_k, \qquad \sum_{i=1}^k \frac{\ind\{i \leq r\} - G_r}{A_i} = \sum_{i=1}^r \frac{1}{A_i} - G_r S_A.
\end{equation*}

The defining relation $G_r S_A = m_r + \sum_{i=1}^r 1/A_i$ in equation~\eqref{eq:Gr-def} reduces the second sum to $-m_r$, and combining the two contributions yields $-G_r m_k$:
\end{proof}
\end{lemma}

\begin{lemma}[Centered score representation]
\label{lem:centered-score}
With $A_i^\ast$ and $C$ as defined in equation~\eqref{eq:AC-def}, the score components of the MNK log-likelihood admit the following representation:
\begin{align*}
\frac{\partial}{\partial m_j}\ell(\bm{m}; \Omega_k \mid \bm{y}) &= A_j^\ast - A_{j+1}^\ast \quad (j = 1, \dots, k-1), \\
\frac{\partial}{\partial m_k}\ell(\bm{m}; \Omega_k \mid \bm{y}) &= A_k^\ast - C, \\
\frac{\partial}{\partial \Omega_k}\ell(\bm{m}; \Omega_k \mid \bm{y}) &= \frac{m_k}{\Omega_k} C.
\end{align*}
\begin{proof}
Substituting $\alpha_i = m_i - m_{i-1}$, $\Delta_i = y_i^2 - y_{i-1}^2$, and $\beta = \Omega_k / m_k$, the score expression in equation~\eqref{eq:score-mj} becomes $\frac{\partial}{\partial m_j}\ell(\bm{m}; \Omega_k \mid \bm{y}) = -\psi(\alpha_j) + \psi(\alpha_{j+1}) + \log\Delta_j - \log\Delta_{j+1}$. Adding and subtracting $\log\beta$ twice yields:
\begin{align*}
\frac{\partial}{\partial m_j}\ell(\bm{m}; \Omega_k \mid \bm{y}) &= \bigl[\log\Delta_j - \psi(\alpha_j) - \log\beta\bigr] - \bigl[\log\Delta_{j+1} - \psi(\alpha_{j+1}) - \log\beta\bigr] \\
&= A_j^\ast - A_{j+1}^\ast.
\end{align*}

For the $m_k$-component, equation~\eqref{eq:score-mk} combined with $\log m_k - \log\Omega_k = -\log\beta$ yields:
\begin{equation*}
\frac{\partial}{\partial m_k}\ell(\bm{m}; \Omega_k \mid \bm{y}) = A_k^\ast + 1 - \frac{y_k^2}{\Omega_k} = A_k^\ast - C,
\end{equation*}
and the $\Omega_k$-component follows directly from \eqref{eq:score-omega}:
\end{proof}
\end{lemma}

\begin{lemma}[Covariance structure of $A_i^\ast$ and $C$]
\label{lem:AC-cov}
The centered random variables $A_1^\ast, \dots, A_k^\ast$, and $C$ defined in equation~\eqref{eq:AC-def} satisfy $\Var(A_i^\ast) = \psi^{(1)}(\alpha_i)$, $\Var(C) = 1/m_k$, $\Cov(A_i^\ast, A_j^\ast) = 0$ for $i \neq j$, and $\Cov(A_i^\ast, C) = 1/m_k$.
\begin{proof}
The first identity follows from Lemma~\ref{lem:gamma-moments}\,(ii) and the independence of increments $\Delta_1, \dots, \Delta_k$ yields $\Cov(A_i^\ast, A_j^\ast) = 0$ for $i \neq j$. As $Y_k^2 = Z_k \sim \Gammadist(m_k, \beta)$,
\begin{equation*}
\Var(C) = \frac{\Var(Y_k^2)}{\Omega_k^2} = \frac{m_k \beta^2}{\Omega_k^2} = \frac{1}{m_k}.
\end{equation*}

For $\Cov(A_i^\ast, C)$, using $Y_k^2 = \sum_{a=1}^k \Delta_a$, the independence of the increments and Lemma ~\ref{lem:gamma-moments},(iii)
\begin{equation*}
\Cov(A_i^\ast, C) = \frac{1}{\Omega_k} \Cov(\log\Delta_i, \Delta_i) = \frac{\beta}{\Omega_k} = \frac{1}{m_k}. \qedhere
\end{equation*}
\end{proof}
\end{lemma}
\section{Proofs of theorems~\ref{thm:mme} and~\ref{thm:cfe}}
\label{sec:app-B}

Both proofs follow the same structure: a statistical vector is identified, whose sample mean converges in the distribution via the multivariate central limit theorem. The estimator is then expressed as a continuously differentiable function of the sample mean, and the delta method is applied. The finiteness of the relevant second moments is verified below for each estimator, which follows directly from the standard properties of the gamma distribution. The log-moments and mixed moments are handled by lemma ~\ref{lem:gamma-moments}.

\subsection{Proof of Theorem~\ref{thm:mme}}
\label{subsec:thm2.1}

The MME in equation~\eqref{eq:mme} is a function of the sample moments $\bar{Z}_1, \dots, \bar{Z}_k$ and $\overline{Z_k^2}$, whose population counterparts are
\begin{equation*}
\mu_j = \E[Z_j] = m_j \beta, \qquad M_2 = \E[Z_k^2] = \beta^2 m_k(m_k+1).
\end{equation*}

These are collected into vectors.
\begin{equation*}
\bm{U} := (\mu_1, \dots, \mu_k, M_2)^{\top}, \qquad \bm{U}_n := (\bar{Z}_1, \dots, \bar{Z}_k, \overline{Z_k^2})^{\top},
\end{equation*}
because $Z_{j,t} \sim \Gammadist(m_j, \beta)$ has finite moments of all orders, both $Z_{j, t}$ and $Z_{k, t}^2$ have finite variances, and the multivariate central limit theorem gives
\begin{equation*}
\sqrt{n}(\bm{U}_n - \bm{U}) \xrightarrow{d} N_{k+1}(\bm{0}, \bm{\Sigma}_U), \qquad \bm{\Sigma}_U = \Cov(Z_1, \dots, Z_k, Z_k^2).
\end{equation*}

The entries for $\bm{\Sigma}_U$ are computed as follows. For $r \leq s$, the independence of $Z_r$ and $Z_s - Z_r = \Delta_{r+1} + \cdots + \Delta_s$ yields
\begin{equation*}
\Cov(Z_r, Z_s) = \Var(Z_r) = \beta^2 m_{\min(r, s)}.
\end{equation*}

For the cross-covariance with $Z_k^2$, we set $W := Z_k - Z_r$, which is independent of $Z_r$. Expanding $Z_k^2 = (Z_r + W)^2$ and using $\Cov(Z_r, W^2) = 0$ together with $\Cov(Z_r, Z_r W) = \E[W] \Var(Z_r)$ gives
\begin{equation*}
\Cov(Z_r, Z_k^2) = \Cov(Z_r, Z_r^2) + 2 \E[W] \Var(Z_r),
\end{equation*}
and using $\E[Z_r^3] = \beta^3 m_r (m_r + 1)(m_r + 2)$, $\E[W] = \beta(m_k - m_r)$, and $\Var(Z_r) = \beta^2 m_r$ yields
\begin{equation}
\label{eq:cov-Zr-Zk2}
\Cov(Z_r, Z_k^2) = 2 \beta^3 m_r (m_r + 1) + 2 \beta^3 m_r (m_k - m_r) = 2 \beta^3 m_r (m_k + 1).
\end{equation}

Finally, $\E[Z_k^4] = \beta^4 m_k (m_k + 1)(m_k + 2)(m_k + 3)$ together with $\E[Z_k^2] = \beta^2 m_k(m_k+1)$ yields $\Var(Z_k^2) = 2 \beta^4 m_k (m_k + 1)(2 m_k + 3).$ The MME admits the representation $\hat{\bm{\theta}}_{\MME} = \bm{h}(\bm{U}_n)$, where
\begin{equation*}
h_j(\mu_1, \dots, \mu_k, M_2) = \frac{\mu_k \mu_j}{M_2 - \mu_k^2} \quad (j = 1, \dots, k), \qquad h_{k+1}(\mu_1, \dots, \mu_k, M_2) = \mu_k.
\end{equation*}

Writing $D_U := M_2 - \mu_k^2$ such that $D_U = \beta \Omega_k$ at the true parameter and treating $\mu_k$ and $\mu_j$ as distinct variables, except when $j = k$, the quotient rule gives
\begin{align*}
\frac{\partial m_j}{\partial \mu_i} &= \frac{\mu_k}{D_U} \ind\{i = j\} \quad (i < k), \\
\frac{\partial m_j}{\partial \mu_k} &= \frac{\mu_j + \mu_k \ind\{j = k\}}{D_U} + \frac{2 \mu_k^2 \mu_j}{D_U^2}, \\
\frac{\partial m_j}{\partial M_2} &= -\frac{\mu_k \mu_j}{D_U^2},
\end{align*}
whereas $\partial \Omega_k / \partial \mu_k = 1$, and the remaining derivatives of $\Omega_k$ vanish. By substituting $\mu_j = \beta m_j$ and $D_U = \beta^2 m_k$ for the true parameters, the Jacobian entries are reduced to
\begin{align*}
\left.\frac{\partial m_j}{\partial \mu_i}\right|_{\bm{U}} &= \frac{1}{\beta} \ind\{i = j\}, \\
\left.\frac{\partial m_j}{\partial \mu_k}\right|_{\bm{U}} &= \frac{\ind\{j = k\}}{\beta} + \frac{m_j}{\beta}\left(2 + \frac{1}{m_k}\right), \\
\left.\frac{\partial m_j}{\partial M_2}\right|_{\bm{U}} &= -\frac{m_j}{\beta \Omega_k}.
\end{align*}

By the delta method,
\begin{equation*}
\bm{\Sigma}_{\MME} = \bm{J}_{\MME} \bm{\Sigma}_U \bm{J}_{\MME}^{\top},
\end{equation*}
where $\bm{J}_{\MME}$ denotes the Jacobian of $\bm{h}$ at $\bm{U}$. For $r, s \in \{1, \dots, k\}$, we substitute the Jacobian entries and covariances into
\begin{equation*}
[\bm{\Sigma}_{\MME}]_{r, s} = \sum_{a, b = 1}^{k+1} \frac{\partial m_r}{\partial U_a} \frac{\partial m_s}{\partial U_b} \Cov(U_a, U_b),
\end{equation*}
and simplifying using $\beta^2 m_k = \beta \Omega_k = \Omega_k^2 / m_k$ yields after algebraic cancellation,
\begin{equation*}
[\bm{\Sigma}_{\MME}]_{r, s} = m_{\min(r, s)} + \left(2 + \frac{1}{m_k}\right) m_r m_s.
\end{equation*}

For the off-diagonal block, the only nonzero derivative of $\Omega_k$ is $\partial \Omega_k / \partial \mu_k = 1$; thus,
\begin{equation*}
[\bm{\Sigma}_{\MME}]_{r, k+1} = \sum_{a=1}^{k+1} \frac{\partial m_r}{\partial U_a} \Cov(U_a, Z_k).
\end{equation*}

Because $\partial m_r / \partial \mu_k$ contains the indicator $\ind\{r = k\}$, the computation is performed separately for $r < k$ and $r = k$. For $r < k$, the three non-zero contributions are:
\begin{align*}
\sum_{i=1}^{k-1} \left.\frac{\partial m_r}{\partial \mu_i}\right|_{\bm{U}} \Cov(Z_i, Z_k) &= \frac{1}{\beta} \cdot \beta^2 m_r = \beta m_r, \\
\left.\frac{\partial m_r}{\partial \mu_k}\right|_{\bm{U}} \Var(Z_k) &= \frac{m_r}{\beta}\left(2 + \frac{1}{m_k}\right) \cdot \beta^2 m_k = \beta m_r (2 m_k + 1), \\
\left.\frac{\partial m_r}{\partial M_2}\right|_{\bm{U}} \Cov(Z_k^2, Z_k) &= -\frac{m_r}{\beta \Omega_k} \cdot 2 \beta^3 m_k (m_k + 1) = -2 \beta m_r (m_k + 1),
\end{align*}
where equation~\eqref{eq:cov-Zr-Zk2} with index $k$ is used for $\Cov(Z_k^2, Z_k)$. These combine to $[\bm{\Sigma}_{\MME}]_{r, k+1} = \beta m_r [1 + (2 m_k + 1) - 2(m_k + 1)] = 0.$ For $r = k$, the $\mu_i$-derivatives with $i < k$ vanish, and from the extra $\ind\{r = k\}/\beta$ term, we can derive $\frac{\partial m_k}{\partial \mu_k} = \frac{1}{\beta} + \frac{m_k(2 + 1/m_k)}{\beta} = \frac{2(m_k + 1)}{\beta}$. The two non-zero contributions $2\beta m_k(m_k+1)$ and $-2\beta m_k(m_k+1)$ cancel each other, giving $[\bm{\Sigma}_{\MME}]_{k, k+1} = 0$. By combining both the cases, $[\bm{\Sigma}_{\MME}]_{r, k+1} = 0$ for every $r = 1, \dots, k$. Finally, as $\hat{\Omega}_{k, \MME} = \bar{Z}_k$,
\begin{equation*}
[\bm{\Sigma}_{\MME}]_{k+1, k+1} = \Var(Z_k) = \beta^2 m_k = \frac{\Omega_k^2}{m_k},
\end{equation*}
This completes the proof. \qed

\subsection{Proof of theorem~\ref{thm:cfe}}
\label{subsec:thm2.2}

The CFE in equation~\eqref{eq:cfe} is a function of the sample quantities $P_{j,n}$ and $Q_{j,n}$ for $j = 1, \dots, k$ and $R_n$, whose population counterparts are $A_j = \E[\log\Delta_j]$, $B_j = \E[\Delta_j \log\Delta_j]$, and $\Omega_k = \E[Y_k^2]$. These are collected into vectors.
\begin{equation*}
\bm{H} := (X_1, \dots, X_k, W_1, \dots, W_k, S)^{\top}, \qquad \bm{\tau} := \E[\bm{H}] = (A_1, \dots, A_k, B_1, \dots, B_k, \Omega_k)^{\top},
\end{equation*}
where $X_j := \log\Delta_j$, $W_j := \Delta_j \log\Delta_j$, and $S := Y_k^2 = \sum_{j=1}^k \Delta_j$ is the sample mean.
\begin{equation*}
\bm{T}_n = (P_{1,n}, \dots, P_{k,n}, Q_{1,n}, \dots, Q_{k,n}, R_n)^{\top}.
\end{equation*}

The components of $\bm{H}$ involve logarithmic and mixed terms whose second moments are not polynomial functions of $\Delta_j$ and therefore require separate justification. By items~(ii) and~(vi) of Lemma~\ref{lem:gamma-moments}, $\Var(\log\Delta_j) = \psi^{(1)}(\alpha_j) < \infty$ and $\Var(\Delta_j\log\Delta_j) < \infty$, and $\Var(Z_k) = m_k\beta^2 < \infty$ as well, so every component of $\bm{H}$ has a finite second moment and the multivariate central limit theorem gives:
\begin{equation*}
\sqrt{n}(\bm{T}_n - \bm{\tau}) \xrightarrow{d} N_{2k+1}(\bm{0}, \bm{\Sigma}_T).
\end{equation*}

The blocks of $\bm{\Sigma}_T$ are computed by invoking Lemma~\ref{lem:gamma-moments} together with the independence of the increments $\Delta_1, \dots, \Delta_k$. All off-diagonal entries within the $X$–$X$, $X$–$W$, and $W$–$W$ blocks vanish independently, and the diagonal entries are:
\begin{align*}
\Cov(X_i, X_j) &= \delta_{ij} \psi^{(1)}(\alpha_j), \\
\Cov(X_i, W_j) &= \delta_{ij} \beta [A_j + \alpha_j \psi^{(1)}(\alpha_j)], \\
\Cov(W_i, W_j) &= \delta_{ij} \beta^2 [\alpha_j A_j^2 + 2(\alpha_j + 1) A_j + \alpha_j(\alpha_j + 1)\psi^{(1)}(\alpha_j) + 1].
\end{align*}

For cross-terms with $S$, the independence of the increments reduces each sum to a single term, giving:
\begin{align*}
\Cov(X_j, S) &= \Cov(\log\Delta_j, \Delta_j) = \beta, \\
\Cov(W_j, S) &= \Cov(\Delta_j\log\Delta_j, \Delta_j) = \beta^2(\alpha_j A_j + \alpha_j + 1),
\end{align*}
whereas $\Var(S) = \sum_{a=1}^k \Var(\Delta_a) = m_k \beta^2 = \frac{\Omega_k^2}{m_k}$, The CFE admits the representation $\hat{\bm{\theta}}_{\CFE} = \bm{g}(\bm{T}_n)$, where at the population level
\begin{equation*}
m_r = \frac{S_A T_r}{D} - \sum_{a=1}^r \frac{1}{A_a}, \qquad T_r = \sum_{a=1}^r \frac{B_a}{A_a}, \qquad \Omega_k = R.
\end{equation*}

Under $A_j \neq 0$ for all $j$ and $S_A \neq 0$, map $\bm{g}$ is continuously differentiable in the neighborhood of $\bm{\tau}$ because $D = \sum_a B_a/A_a - \Omega_k$ is reduced to $\beta S_A \neq 0$ as the true parameter upon substituting $B_a = \beta(\alpha_a A_a + 1)$ and $\Omega_k = \beta m_k$. Direct computation using the dependence of $S_A$, $T_r$, $D$, and $\sum_{a \leq r} A_a^{-1}$ on $A_i$ together with the quotient rule, gives:
\begin{equation*}
\frac{\partial m_r}{\partial A_i} = \frac{\ind\{i \leq r\} - G_r}{A_i^2}\left[1 - \frac{S_A B_i}{D}\right].
\end{equation*}

For the true parameter, $S_A B_i / D = B_i/\beta = \alpha_i A_i + 1$, so $1 - S_A B_i/D = -\alpha_i A_i$; therefore, $\frac{\partial m_r}{\partial A_i} = -\alpha_i u_{r,i}$. A parallel computation yields $\frac{\partial m_r}{\partial B_i} = \frac{u_{r,i}}{\beta}$ and $\frac{\partial m_r}{\partial R} = \frac{G_r}{\beta}$, whereas $\partial \Omega_k/\partial R = 1$ and the remaining derivatives of $\Omega_k$ disappear. By the delta method,
\begin{equation*}
\bm{\Sigma}_{\CFE} = \bm{J}_{\CFE} \bm{\Sigma}_T \bm{J}_{\CFE}^{\top}.
\end{equation*}

For $r, s \in \{1, \dots, k\}$, the quadratic form is decomposed as
\begin{equation*}
[\bm{\Sigma}_{\CFE}]_{r,s} = \Lambda_1 + \Lambda_2 + \Lambda_3,
\end{equation*}
where $\Lambda_1$ collects the contributions from the $X$-$X$, $X$-$W$, and $W$-$W$ blocks, $\Lambda_2$ collects the $X$-$S$ and $W$-$S$ cross terms, and $\Lambda_3$ comes from the $S$-$S$ variance. By substituting the Jacobian and diagonal $\bm{\Sigma}_T$ entries, $\Lambda_1 = \sum_{i=1}^k u_{r,i} u_{s,i} c_i$ with
\begin{equation*}
c_i := \alpha_i^2 \psi^{(1)}(\alpha_i) - 2\alpha_i [A_i + \alpha_i \psi^{(1)}(\alpha_i)] + [\alpha_i A_i^2 + 2(\alpha_i + 1) A_i + \alpha_i(\alpha_i + 1)\psi^{(1)}(\alpha_i) + 1].
\end{equation*}

Collecting the coefficients of $\psi^{(1)}(\alpha_i)$, $A_i^2$, $A_i$ and the constant simplify $c_i$ to
\begin{equation*}
c_i = \alpha_i A_i^2 + 2 A_i + \alpha_i \psi^{(1)}(\alpha_i) + 1,
\end{equation*}
and using $u_{r,i} u_{s,i} = (\ind\{i \leq r\} - G_r)(\ind\{i \leq s\} - G_s)/A_i^2$ yields:
\begin{equation*}
\Lambda_1 = \sum_{i=1}^k [\alpha_i A_i^2 + 2 A_i + \alpha_i \psi^{(1)}(\alpha_i) + 1] \frac{(\ind\{i \leq r\} - G_r)(\ind\{i \leq s\} - G_s)}{A_i^2},
\end{equation*}
which coincides with $\Lambda_{r, s}$ as defined in Theorem~\ref{thm:cfe}.

For $\Lambda_2$, substituting the Jacobian entries and cross-covariances yields, after the $\beta$ factors cancel,
\begin{equation*}
\Lambda_2 = G_s \sum_{i=1}^k u_{r,i}(\alpha_i A_i + 1) + G_r \sum_{i=1}^k u_{s,i}(\alpha_i A_i + 1),
\end{equation*}
and Lemma~\ref{lem:weighted-sum} reduces each sum to $-G_r m_k$ and $-G_s m_k$, respectively, yielding $\Lambda_2 = -2 m_k G_r G_s$. The $S$-$S$ contribution is:
\begin{equation*}
\Lambda_3 = \frac{G_r}{\beta} \cdot \frac{G_s}{\beta} \cdot m_k \beta^2 = m_k G_r G_s,
\end{equation*}
Therefore, $\Lambda_2 + \Lambda_3 = -m_k G_r G_s$, which together with $\Lambda_1 = \Lambda_{r, s}$ yields $[\bm{\Sigma}_{\CFE}]_{r, s} = \Lambda_{r, s} - m_k G_r G_s$ as stated in Theorem~\ref{thm:cfe}. For an off-diagonal block,
\begin{equation*}
[\bm{\Sigma}_{\CFE}]_{r,k+1} = \sum_{i=1}^k \frac{\partial m_r}{\partial A_i}\Cov(X_i, S) + \sum_{i=1}^k \frac{\partial m_r}{\partial B_i}\Cov(W_i, S) + \frac{\partial m_r}{\partial R} \Var(S).
\end{equation*}

By substituting the Jacobian entries and covariances, the first two sums combine to $\beta \sum_{i=1}^k u_{r,i}(\alpha_i A_i + 1) = -\beta G_r m_k = -G_r \Omega_k$ by Lemma~\ref{lem:weighted-sum}, whereas the third term equals $G_r \Omega_k$ because $\Omega_k^2/(\beta m_k) = \Omega_k$. These two contributions cancel out, yielding $[\bm{\Sigma}_{\CFE}]_{r,k+1} = 0$. Finally, as $\hat{\Omega}_{k,\CFE} = R_n$,
\begin{equation*}
[\bm{\Sigma}_{\CFE}]_{k+1,k+1} = \Var(S) = \frac{\Omega_k^2}{m_k},
\end{equation*}
This completes the proof. \qed
\section{Proofs of regularity conditions}
\label{sec:app-C}

\subsection{Proof of Condition 5}
\label{subsec:cond5}

This section provides a full entrywise verification of the Fisher information identity
\begin{equation*}
[\bm{I}_1(\bm{\theta})]_{ij} = \E[S_i S_j] = -\E\!\left[\frac{\partial^2}{\partial \theta_i \partial \theta_j}\ell_1(\bm{\theta})\right],
\end{equation*}
required in Condition 5 in Section~\ref{sec:aee}, where $\ell_1(\bm{\theta}) := \ell(\bm{m}; \Omega_k \mid \bm{Y})$ denotes the log likelihood of a single observation. The score components are as follows:
\begin{equation*}
S_{m_j} := \frac{\partial}{\partial m_j}\ell_1(\bm{\theta}), \qquad S_{\Omega_k} := \frac{\partial}{\partial \Omega_k}\ell_1(\bm{\theta}),
\end{equation*}
and the centered representation from Lemma~\ref{lem:centered-score} is used throughout. For $j = 1, \dots, k-1$, the representation $S_{m_j} = A_j^\ast - A_{j+1}^\ast$ and the independence of $A_i^\ast$ for distinct indices yields
\begin{equation*}
\Var(S_{m_j}) = \Var(A_j^\ast) + \Var(A_{j+1}^\ast) = \psi^{(1)}(\alpha_j) + \psi^{(1)}(\alpha_{j+1}).
\end{equation*}

For the cross-covariance of adjacent $m$scores with $j < k-1$,
\begin{equation*}
\Cov(S_{m_j}, S_{m_{j+1}}) = \Cov(A_j^\ast - A_{j+1}^\ast, A_{j+1}^\ast - A_{j+2}^\ast) = -\Var(A_{j+1}^\ast) = -\psi^{(1)}(\alpha_{j+1}),
\end{equation*}
For the boundary pair $(j, j+1) = (k-1, k)$,
\begin{align*}
\Cov(S_{m_{k-1}}, S_{m_k}) &= \Cov(A_{k-1}^\ast - A_k^\ast, A_k^\ast - C) \\
&= -\Cov(A_{k-1}^\ast, C) - \Var(A_k^\ast) + \Cov(A_k^\ast, C) \\
&= -\frac{1}{m_k} - \psi^{(1)}(\alpha_k) + \frac{1}{m_k} = -\psi^{(1)}(\alpha_k).
\end{align*}

Non-adjacent pairs satisfy $\Cov(S_{m_r}, S_{m_s}) = 0$ for $|r - s| > 1$ owing to the independence of $A_i^\ast$ and the structure of the representation. For the $m_k$-diagonal, we have
\begin{equation*}
\Var(S_{m_k}) = \Var(A_k^\ast) + \Var(C) - 2\Cov(A_k^\ast, C) = \psi^{(1)}(\alpha_k) + \frac{1}{m_k} - \frac{2}{m_k} = \psi^{(1)}(\alpha_k) - \frac{1}{m_k}.
\end{equation*}

From Lemma~\ref{lem:trigamma}, $\psi^{(1)}(\alpha_k) > 1/\alpha_k \geq 1/m_k$; thus, this quantity is strictly positive. For the cross terms with $S_{\Omega_k} = (m_k / \Omega_k) C$ and for $j = 1, \dots, k-1$,
\begin{equation*}
\Cov(S_{m_j}, S_{\Omega_k}) = \frac{m_k}{\Omega_k} \Cov(A_j^\ast - A_{j+1}^\ast, C) = \frac{m_k}{\Omega_k}\left(\frac{1}{m_k} - \frac{1}{m_k}\right) = 0,
\end{equation*}
and for $j = k$,
\begin{equation*}
\Cov(S_{m_k}, S_{\Omega_k}) = \frac{m_k}{\Omega_k} \Cov(A_k^\ast - C, C) = \frac{m_k}{\Omega_k}\left(\frac{1}{m_k} - \frac{1}{m_k}\right) = 0.
\end{equation*}

Finally,
\begin{equation*}
\Var(S_{\Omega_k}) = \frac{m_k^2}{\Omega_k^2} \Var(C) = \frac{m_k^2}{\Omega_k^2} \cdot \frac{1}{m_k} = \frac{m_k}{\Omega_k^2}.
\end{equation*}

The non-zero entries of the Hessian given in equations~\eqref{eq:hess-mj}--\eqref{eq:hess-mk-omega} are evaluated as expected. The entries in equation~\eqref{eq:hess-mj} depend only on the parameters and are equal to their expectations. For entries involving $Y_k^2$ in equation~\eqref{eq:hess-mk-omega}, substituting $\E[Y_k^2] = \Omega_k$ yields:
\begin{equation*}
\E\!\left[\frac{\partial^2}{\partial m_k \partial \Omega_k}\ell_1(\bm{\theta})\right] = -\frac{1}{\Omega_k} + \frac{\Omega_k}{\Omega_k^2} = 0, \qquad \E\!\left[\frac{\partial^2}{\partial \Omega_k^2}\ell_1(\bm{\theta})\right] = \frac{m_k}{\Omega_k^2} - \frac{2 m_k \Omega_k}{\Omega_k^3} = -\frac{m_k}{\Omega_k^2}.
\end{equation*}
Therefore, the expected Hessian entries are: 
\begin{align*}
-\E\!\left[\frac{\partial^2}{\partial m_j^2}\ell_1(\bm{\theta})\right] &= \psi^{(1)}(\alpha_j) + \psi^{(1)}(\alpha_{j+1}) \quad (j = 1, \dots, k-1), \\
-\E\!\left[\frac{\partial^2}{\partial m_j \partial m_{j+1}}\ell_1(\bm{\theta})\right] &= -\psi^{(1)}(\alpha_{j+1}) \quad (j = 1, \dots, k-1), \\
-\E\!\left[\frac{\partial^2}{\partial m_k^2}\ell_1(\bm{\theta})\right] &= \psi^{(1)}(\alpha_k) - \frac{1}{m_k}, \\
-\E\!\left[\frac{\partial^2}{\partial m_k \partial \Omega_k}\ell_1(\bm{\theta})\right] &= 0, \\
-\E\!\left[\frac{\partial^2}{\partial \Omega_k^2}\ell_1(\bm{\theta})\right] &= \frac{m_k}{\Omega_k^2},
\end{align*}
which matches the score covariances computed above entry-by-entry. Hence, $\bm{I}_1(\bm{\theta}) = \E[\nabla \ell_1(\bm{\theta}) \cdot \nabla \ell_1(\bm{\theta})^{\top}] = -\E[\nabla^2 \ell_1(\bm{\theta})]$ and Condition 5 is verified. \qed
\subsection{Proof of condition 6}
\label{subsec:cond6}

By the block structure of $\bm{I}_1(\bm{\theta})$ in equation~\eqref{eq:fisher-block}, positive definiteness of $\bm{I}_1(\bm{\theta})$ reduces to the strict positivity of the scalar $[\bm{I}_1(\bm{\theta})]_{\Omega_k \Omega_k} = m_k / \Omega_k^2$, which is immediate, together with the positive definiteness of the tridiagonal block $\bm{I}_{\bm{m}\bm{m}}$. The latter is established by decomposing the quadratic form $\bm{x}^{\top} \bm{I}_{\bm{m}\bm{m}} \bm{x}$ via a telescoping sum and bounding the resulting cross terms using the Cauchy--Schwarz inequality. Finiteness of all entries of $\bm{I}_1(\bm{\theta})$ follows from the finiteness of $\psi^{(1)}(\alpha_j)$ and the gamma moment calculations under Condition 5.

For any non-zero $\bm{x} \in \mathbb{R}^k$, expanding the quadratic form using the tridiagonal structure of $\bm{I}_{\bm{m}\bm{m}}$ and applying the telescoping identity $\psi^{(1)}(\alpha_{j+1})[x_j^2 - 2 x_j x_{j+1} + x_{j+1}^2] = \psi^{(1)}(\alpha_{j+1})(x_{j+1} - x_j)^2$ yields:
\begin{equation}
\label{eq:Imm-quad}
\bm{x}^{\top} \bm{I}_{\bm{m}\bm{m}} \bm{x} = \psi^{(1)}(\alpha_1) x_1^2 + \sum_{j=2}^k \psi^{(1)}(\alpha_j)(x_j - x_{j-1})^2 - \frac{x_k^2}{m_k} = \sum_{j=1}^k b_j^2 - \frac{x_k^2}{m_k},
\end{equation}
where $b_1 := \sqrt{\psi^{(1)}(\alpha_1)}\, x_1$ and $b_j := \sqrt{\psi^{(1)}(\alpha_j)}(x_j - x_{j-1})$ for $j = 2, \dots, k$. Because $x_k = x_1 + \sum_{j=2}^k (x_j - x_{j-1}) = \sum_{j=1}^k b_j / \sqrt{\psi^{(1)}(\alpha_j)}$, the Cauchy--Schwarz inequality yields:
\begin{equation*}
x_k^2 \leq \left(\sum_{j=1}^k b_j^2\right) \left(\sum_{j=1}^k \frac{1}{\psi^{(1)}(\alpha_j)}\right),
\end{equation*}
and substituting this into equation~\eqref{eq:Imm-quad} produces the lower bound
\begin{equation*}
\bm{x}^{\top} \bm{I}_{\bm{m}\bm{m}} \bm{x} \geq \left(\sum_{j=1}^k b_j^2\right)\left[1 - \frac{1}{m_k} \sum_{j=1}^k \frac{1}{\psi^{(1)}(\alpha_j)}\right].
\end{equation*}

For $\bm{x} \neq \bm{0}$, at least one $b_j$ is non-zero; therefore, the first factor is strictly positive. From Lemma~\ref{lem:trigamma}, $1/\psi^{(1)}(\alpha_j) < \alpha_j$ holds for every $j$, hence, $\sum_{j=1}^k 1/\psi^{(1)}(\alpha_j) < m_k$ and the second factor is strictly positive. Therefore, $\bm{x}^{\top} \bm{I}_{\bm{m}\bm{m}} \bm{x} > 0$ for every nonzero $\bm{x}$, and Condition 6 is verified. \qed
\end{document}